\newtheorem{theorem}{Theorem}[section]
\newtheorem{lemma}[theorem]{Lemma}
\newtheorem{corollary}[theorem]{Corollary}
\newtheorem{definition}[theorem]{Definition}
\newtheorem{remark}[theorem]{Remark}
\newtheorem{example}[theorem]{Example}
\newtheorem{proposition}[theorem]{Proposition}
\newcommand{\leqR}{\preceq}
\newcommand{\ctx}{\Gamma}
\newcommand{\modal}[1]{\Box_{#1}}
\newcommand{\Nat}{\mathbb{N}}
\newcommand{\Val}{\mathsf{Val}}
\newcommand{\eval}{\Downarrow}
\newcommand{\U}{\mathcal{U}}
\newcommand{\El}{\mathsf{El}}
\newcommand{\Fin}{\mathsf{Fin}}
\newcommand{\Vectt}{\mathsf{Vec}}
\newcommand{\Pitype}{\Pi}
\newcommand{\refl}{\mathsf{refl}}
\newcommand{\IdType}{\mathsf{Id}}
\newcommand{\J}{\mathsf{J}}
\newcommand{\natrec}{\mathsf{natrec}}
\newcommand{\vecrec}{\mathsf{vecrec}}
\newcommand{\smark}{\mathsf{S}}
\newcommand{\zmark}{\mathsf{Z}}
\title{Resource-Bounded Martin-L\"{o}f Type Theory:\\Compositional Cost Analysis for Dependent Types}
\author[1]{Mirco A.~Mannucci}
\author[2]{Corey Thuro}
\affil[1]{HoloMathics, LLC \authorcr \texttt{mirco@holomathics.com}}
\affil[2]{Department of Mathematics and Statistics, University of Maryland, Baltimore County \authorcr \texttt{cthuro1@umbc.edu}}
\date{December 2025}
\begin{document}
\maketitle

\begin{center}
\textbf{Keywords:} dependent types, Martin-L\"{o}f type theory, resource analysis, graded modalities, size-indexed bounds, inductive families, presheaf semantics

\medskip
\textbf{MSC 2020:} 03B38 (Type theory), 03G30 (Categorical logic, topoi), 68Q25 (Analysis of algorithms), 18C50 (Categorical semantics)

\medskip
\textbf{ACM CCS:} Theory of computation $\to$ Type theory; Dependent types; Program analysis
\end{center}

\begin{abstract}
We extend resource-bounded type theory to Martin-L\"{o}f Type Theory (MLTT) with dependent types, enabling size-indexed cost bounds for programs over inductive families. Our calculus features dependent function types $\Pitype_{x:A}^{b(x)} B(x)$ where bounds $b(x)$ depend on the argument, dependent pair types $\Sigma_{x:A} B(x)$ with additive cost composition, and inductive families (notably $\Vectt(A,n)$ and $\Fin(n)$) with structurally recursive eliminators carrying size-dependent bounds. We introduce a resource-indexed universe hierarchy $\U_r$ where $r \in L$ tracks the cost of type formation, and a graded modality $\Box_r$ for feasibility certification. Our main results are: (1) a cost soundness theorem establishing that synthesized bounds over-approximate operational costs, with bounds expressed as functions of size indices; (2) a semantic model in the presheaf topos $\mathbf{Set}^L$ extended with dependent presheaves and comprehension structure; (3) canonicity for the intensional fragment; and (4) initiality of the syntactic model. We demonstrate the framework with case studies including length-indexed vector operations with $O(n)$ bounds and binary search with $O(\log n)$ bounds expressed in the type. This work bridges the gap between dependent type theory and quantitative resource analysis, enabling certified cost bounds for size-dependent algorithms.
\end{abstract}

\tableofcontents

\section{Introduction}

In our previous work \cite{mannucci2025rbtt}, we developed resource-bounded type theory for the simply-typed lambda calculus, establishing compositional cost analysis via graded modalities over abstract resource lattices. While that framework handles first-order programs effectively, many algorithms have costs that depend on the \emph{size} of their inputs: sorting takes $O(n \log n)$ time, matrix multiplication takes $O(n^3)$, and tree traversal takes $O(n)$ where $n$ is the number of nodes.

To express such size-dependent bounds \emph{within the type system}, we need dependent types. This paper extends resource-bounded type theory to Martin-L\"{o}f Type Theory (MLTT), enabling:

\begin{itemize}
  \item \textbf{Size-indexed bounds:} The type $\Vectt(A, n) \to^{O(n)} B$ expresses that a function on vectors of length $n$ has cost linear in $n$.
  \item \textbf{Dependent cost functions:} Bounds $b(x)$ that depend on term-level values, not just types.
  \item \textbf{Inductive families:} $\Vectt(A, n)$, $\Fin(n)$, and other indexed types with eliminators carrying precise bounds.
  \item \textbf{Universes with resource tracking:} $\U_r$ classifies types whose formation costs at most $r$.
\end{itemize}

\paragraph{Motivating example.} Consider a function that sums a length-indexed vector:
\[
\mathsf{sum} : \Pitype_{n:\Nat}^{\bot} \Vectt(\Nat, n) \to^{c \cdot n + d} \Nat
\]
The type expresses: given any $n$ (cost $\bot$ to receive), and a vector of length $n$, the function returns a natural number with cost $c \cdot n + d$ for constants $c, d$. The bound is \emph{linear in $n$}, and this linearity is visible in the type.

\paragraph{Contributions.}
\begin{enumerate}[leftmargin=2em]
  \item A dependent type theory with resource bounds: $\Pitype$-types, $\Sigma$-types, identity types, universes, and inductive families (\S\ref{sec:calculus}).
  \item Size-dependent bound synthesis: bounds expressed as functions of size indices (\S\ref{sec:size-bounds}).
  \item Cost soundness for the intensional fragment (Theorem~\ref{thm:cost-soundness}).
  \item A semantic model in presheaves with dependent structure (\S\ref{sec:semantics}).
  \item Canonicity (Theorem~\ref{thm:canonicity}) and initiality (Theorem~\ref{thm:initiality}).
  \item Case studies: vector operations, binary search, and sorting with verified bounds (\S\ref{sec:case-studies}).
\end{enumerate}

\paragraph{Scope.} We treat intensional MLTT without univalence or higher inductive types. The identity type $\IdType_A(x, y)$ has the standard J-eliminator; we do not address homotopy-theoretic concerns. Extension to Homotopy Type Theory is future work.

\subsection{Relation to Prior Work}

This paper is the second in a series:
\begin{enumerate}
  \item \textbf{RB-TT} \cite{mannucci2025rbtt}: Resource-bounded simply-typed lambda calculus. Establishes the basic framework: abstract resource lattices, graded modality $\Box_r$, cost soundness, presheaf semantics in $\mathbf{Set}^L$, and initiality.
  
  \item \textbf{RB-MLTT} (this paper): Extension to dependent types. Adds $\Pitype$-types with size-dependent bounds, $\Sigma$-types, identity types, universes $\U_r$, and inductive families with eliminators.
  
  \item \textbf{RB-HoTT} (future): Extension to Homotopy Type Theory with resource-bounded paths and higher structure.
\end{enumerate}

We assume familiarity with the basic framework from \cite{mannucci2025rbtt}, particularly the resource lattice $(L, \preceq, \oplus, \sqcup, \bot)$ and the typing judgment $\ctx \vdash_{r;\,b} t : A$.

\section{Design Principles}
\label{sec:principles}

We extend the principles from \cite{mannucci2025rbtt} to the dependent setting:

\begin{enumerate}[leftmargin=*,label=\textbf{P\arabic*.}]
  \item \textbf{Abstraction over concreteness.} Resources remain elements of an abstract lattice $L$.
  
  \item \textbf{Budgets are first-class.} Resource bounds appear in types, judgments, and now depend on term-level values.
  
  \item \textbf{Compositional rules.} Bound synthesis mirrors term structure, with dependent bounds for dependent types.
  
  \item \textbf{Size-indexed bounds.} Bounds may be functions $b : A \to L$ of the input, enabling $O(n)$, $O(n^2)$, $O(\log n)$ expressions.
  
  \item \textbf{Monotonicity preserved.} If $r_1 \leqR r_2$, then $\modal{r_1}A \to \modal{r_2}A$.
  
  \item \textbf{Universes track cost.} Type formation itself may have cost; $\U_r$ classifies types formable within budget $r$.
\end{enumerate}

\paragraph{Notation.} We write $L$ for the resource lattice; $r, s \in L$ for resource elements; $\preceq$ for the order; $\oplus$ for sequential composition; $\sqcup$ for branching (join); $\bot$ for zero. For size-dependent bounds, we write $b(x)$ or $b(n)$ where the bound depends on a term.

\section{Resource-Bounded MLTT: The Calculus}
\label{sec:calculus}

\subsection{Judgments}

We have four judgment forms:

\begin{enumerate}
  \item $\ctx \vdash_{r} \mathsf{ok}$ --- context $\ctx$ is well-formed under budget $r$
  \item $\ctx \vdash_{r;\,b} A \; \mathsf{type}$ --- $A$ is a type with formation cost $b$
  \item $\ctx \vdash_{r;\,b} t : A$ --- term $t$ has type $A$ with cost $b$
  \item $\ctx \vdash_{r;\,b} A \equiv B$ --- types $A$ and $B$ are definitionally equal (with cost $b$ to check)
\end{enumerate}

The subscript $r$ is the ambient budget; $b$ is the synthesized bound satisfying $b \preceq r$.
\begin{remark}[Role of the ambient budget]
In this paper all non-trivial information about resource usage is carried by the synthesized bound $b$; the ambient budget $r$ is kept as a parameter to match the notation and metatheory of~\cite{mannucci2025rbtt}.
We do \emph{not} use $r$ to cut off derivations or enforce global constraints: every rule is formulated so that, whenever a judgement is derivable, its bound $b$ already satisfies $b \preceq r$.
Future work may use $r$ to express global budget policies (for instance, forbidding derivations whose synthesized bound would exceed $r$), but here $r$ is operationally inert.
\end{remark}

\subsection{Contexts}

\begin{mathpar}
\inferrule{ }{\cdot \vdash_r \mathsf{ok}} \quad (\text{Ctx-Empty})

\inferrule{\ctx \vdash_r \mathsf{ok} \\ \ctx \vdash_{r;\,b} A \; \mathsf{type}}
          {\ctx, x : A \vdash_r \mathsf{ok}} \quad (\text{Ctx-Ext})
\end{mathpar}

\subsection{Universes}
\label{sec:universes}

We introduce a hierarchy of resource-indexed universes $\U_r$ for $r \in L$.

\begin{definition}[Resource-indexed universes]
$\U_r$ is the universe of types whose codes can be formed with cost at most $r$. We have:
\begin{itemize}
  \item $\U_r : \U_{r \oplus \delta_\U}$ (universes form a hierarchy with cost $\delta_\U$)
  \item If $r_1 \preceq r_2$, then $\U_{r_1} \subseteq \U_{r_2}$ (cumulativity) (i.e., larger indices permit at least as many codes)
\end{itemize}
\end{definition}

\begin{mathpar}
\inferrule{\ctx \vdash_r \mathsf{ok} \\ s \preceq r}
          {\ctx \vdash_{r;\,\delta_\U} \U_s \; \mathsf{type}} \quad (\U\text{-Form})

\inferrule{\ctx \vdash_{r;\,b} A : \U_s}
          {\ctx \vdash_{r;\,b \oplus \delta_{\El}} \El(A) \; \mathsf{type}} \quad (\El\text{-Form})

\inferrule{\ctx \vdash_{r;\,b} A : \U_{s_1} \\ s_1 \preceq s_2}
          {\ctx \vdash_{r;\,b} A : \U_{s_2}} \quad (\U\text{-Cumul})
\end{mathpar}

The $\El$ operator decodes a code $A : \U_s$ into an actual type. In practice, we often elide $\El$ when unambiguous.

\subsection{Dependent Function Types ($\Pitype$-types)}

The key innovation is that bounds can depend on the argument.

\begin{definition}[Dependent function type with bounds]
For $A$ a type and $B(x)$ a family over $A$, and $b : A \to L$ a bound function:
\[
\Pitype_{x:A}^{b(x)} B(x)
\]
is the type of functions that, given $a : A$, produce $B(a)$ with cost $b(a)$.
\end{definition}

\begin{mathpar}
\inferrule{\ctx \vdash_{r;\,b_A} A \; \mathsf{type} \\ 
           \ctx, x : A \vdash_{r;\,b_B} B \; \mathsf{type} \\
           \ctx, x : A \vdash_{r;\,b(x)} b(x) : L}
          {\ctx \vdash_{r;\,b_A \oplus \delta_\Pi} \Pitype_{x:A}^{b(x)} B \; \mathsf{type}} \quad (\Pi\text{-Form})
\end{mathpar}

\begin{mathpar}
\inferrule{\ctx, x : A \vdash_{r;\,b(x)} t : B}
          {\ctx \vdash_{r;\,\bot} \lambda x.t : \Pitype_{x:A}^{b(x)} B} \quad (\Pi\text{-Intro})

\inferrule{\ctx \vdash_{r;\,b_f} f : \Pitype_{x:A}^{b(x)} B \\ 
           \ctx \vdash_{r;\,b_a} a : A}
          {\ctx \vdash_{r;\,b_f \oplus b_a \oplus b(a) \oplus \delta_{\mathsf{app}}} f\,a : B[x := a]} \quad (\Pi\text{-Elim})

\inferrule{\ctx, x : A \vdash_{r;\,b(x)} t : B \\ 
           \ctx \vdash_{r;\,b_a} a : A}
          {\ctx \vdash_{r;\,b_a \oplus b(a) \oplus \delta_\beta} (\lambda x.t)\,a \equiv t[x := a] : B[x := a]} \quad (\Pi\text{-}\beta)
\end{mathpar}

\paragraph{Key insight.} The application rule $(\Pi\text{-Elim})$ includes $b(a)$---the bound evaluated at the specific argument $a$. This enables size-dependent cost tracking: if $b(n) = c \cdot n$, then applying the function to argument $n$ costs $c \cdot n$.

\paragraph{Non-dependent case.} When $B$ does not depend on $x$ and $b(x) = b$ is constant, we recover the simple function type $A \to^b B$.

\subsection{Dependent Pair Types ($\Sigma$-types)}

\begin{mathpar}
\inferrule{\ctx \vdash_{r;\,b_A} A \; \mathsf{type} \\ 
           \ctx, x : A \vdash_{r;\,b_B} B \; \mathsf{type}}
          {\ctx \vdash_{r;\,b_A \oplus \delta_\Sigma} \Sigma_{x:A} B \; \mathsf{type}} \quad (\Sigma\text{-Form})

\inferrule{\ctx \vdash_{r;\,b_a} a : A \\ 
           \ctx \vdash_{r;\,b_b} b : B[x := a]}
          {\ctx \vdash_{r;\,b_a \oplus b_b} (a, b) : \Sigma_{x:A} B} \quad (\Sigma\text{-Intro})

\inferrule{\ctx \vdash_{r;\,b_p} p : \Sigma_{x:A} B}
          {\ctx \vdash_{r;\,b_p \oplus \delta_{\pi_1}} \pi_1(p) : A} \quad (\Sigma\text{-Elim}_1)

\inferrule{\ctx \vdash_{r;\,b_p} p : \Sigma_{x:A} B}
          {\ctx \vdash_{r;\,b_p \oplus \delta_{\pi_2}} \pi_2(p) : B[x := \pi_1(p)]} \quad (\Sigma\text{-Elim}_2)
\end{mathpar}

Costs compose additively for pair formation; projections add a small overhead $\delta_{\pi_i}$.

\subsection{Identity Types}

\begin{mathpar}
\inferrule{\ctx \vdash_{r;\,b_A} A \; \mathsf{type} \\ 
           \ctx \vdash_{r;\,b_x} x : A \\ 
           \ctx \vdash_{r;\,b_y} y : A}
          {\ctx \vdash_{r;\,b_A \oplus b_x \oplus b_y \oplus \delta_{\mathsf{Id}}} \IdType_A(x, y) \; \mathsf{type}} \quad (\mathsf{Id}\text{-Form})

\inferrule{\ctx \vdash_{r;\,b_a} a : A}
          {\ctx \vdash_{r;\,b_a \oplus \delta_{\refl}} \refl_a : \IdType_A(a, a)} \quad (\mathsf{Id}\text{-Intro})
\end{mathpar}

The J-eliminator has cost depending on the motive and method:

\begin{mathpar}
\inferrule{\ctx \vdash_{r;\,b_p} p : \IdType_A(x, y) \\
           \ctx, z : A, w : \IdType_A(x, z) \vdash_{r;\,b_C} C \; \mathsf{type} \\
           \ctx \vdash_{r;\,b_d} d : C[z := x, w := \refl_x]}
          {\ctx \vdash_{r;\,b_p \oplus b_d \oplus \delta_\J} \J(p, d) : C[z := y, w := p]} \quad (\mathsf{J})
\end{mathpar}

\begin{mathpar}
\inferrule{ }
          {\ctx \vdash_{r;\,\delta_{\J\beta}} \J(\refl_a, d) \equiv d : C[z := a, w := \refl_a]} \quad (\mathsf{J}\text{-}\beta)
\end{mathpar}

\subsection{Natural Numbers}

\begin{mathpar}
\inferrule{\ctx \vdash_r \mathsf{ok}}
          {\ctx \vdash_{r;\,\delta_\Nat} \Nat \; \mathsf{type}} \quad (\Nat\text{-Form})

\inferrule{\ctx \vdash_r \mathsf{ok}}
          {\ctx \vdash_{r;\,\delta_\zmark} \zmark : \Nat} \quad (\Nat\text{-Intro}_\zmark)

\inferrule{\ctx \vdash_{r;\,b} n : \Nat}
          {\ctx \vdash_{r;\,b \oplus \delta_\smark} \smark(n) : \Nat} \quad (\Nat\text{-Intro}_\smark)
\end{mathpar}

The eliminator has cost depending on the height of the numeral:

\begin{mathpar}
\inferrule{\ctx \vdash_{r;\,b_n} n : \Nat \\
           \ctx, m : \Nat \vdash_{r;\,b_C} C \; \mathsf{type} \\
           \ctx \vdash_{r;\,b_z} c_z : C[\zmark] \\
           \ctx, m : \Nat, ih : C[m] \vdash_{r;\,b_s(m)} c_s : C[\smark(m)]}
          {\ctx \vdash_{r;\,b_n \oplus b_z \oplus \delta_{\natrec} \oplus \mathsf{sum}_{i < n}(b_s(i) \oplus \delta_{\natrec})} \natrec(n, c_z, c_s) : C[n]} \quad (\Nat\text{-Elim})
\end{mathpar}

Here $\mathsf{sum}_{i < n}(b_s(i) \oplus \delta_{\natrec})$ is the sum of per-step costs (including the eliminator overhead). For constant $b_s$, this becomes $n \cdot (b_s \oplus \delta_{\natrec})$, and the full bound includes an additional base overhead $\delta_{\natrec}$.

\begin{mathpar}
\inferrule{ }{\natrec(\zmark, c_z, c_s) \equiv c_z} \quad (\Nat\text{-}\beta_\zmark)

\inferrule{ }{\natrec(\smark(n), c_z, c_s) \equiv c_s[m := n, ih := \natrec(n, c_z, c_s)]} \quad (\Nat\text{-}\beta_\smark)
\end{mathpar}

\subsection{Inductive Families: Vectors}
\label{sec:vectors}

Length-indexed vectors are the canonical example of an inductive family with size-dependent bounds.

\begin{definition}[$\Vectt(A, n)$]
For $A$ a type and $n : \Nat$, $\Vectt(A, n)$ is the type of vectors of length exactly $n$.
\end{definition}

\begin{mathpar}
\inferrule{\ctx \vdash_{r;\,b_A} A \; \mathsf{type} \\ 
           \ctx \vdash_{r;\,b_n} n : \Nat}
          {\ctx \vdash_{r;\,b_A \oplus b_n \oplus \delta_{\Vectt}} \Vectt(A, n) \; \mathsf{type}} \quad (\textsc{Vec-Form})

\inferrule{\ctx \vdash_{r;\,b_A} A \; \mathsf{type}}
          {\ctx \vdash_{r;\,\delta_{\mathsf{nil}}} \mathsf{nil} : \Vectt(A, \zmark)} \quad (\textsc{Vec-Intro}_{\mathsf{nil}})

\inferrule{\ctx \vdash_{r;\,b_a} a : A \\ 
           \ctx \vdash_{r;\,b_v} v : \Vectt(A, n)}
          {\ctx \vdash_{r;\,b_a \oplus b_v \oplus \delta_{\mathsf{cons}}} \mathsf{cons}(a, v) : \Vectt(A, \smark(n))} \quad (\textsc{Vec-Intro}_{\mathsf{cons}})
\end{mathpar}

The eliminator has cost linear in the length:

\begin{mathpar}
\inferrule{\ctx \vdash_{r;\,b_v} v : \Vectt(A, n) \\
           \ctx, m : \Nat, w : \Vectt(A, m) \vdash_{r;\,b_C} C \; \mathsf{type} \\
           \ctx \vdash_{r;\,b_{\mathsf{nil}}} c_{\mathsf{nil}} : C[\zmark, \mathsf{nil}] \\
           \ctx, m : \Nat, a : A, w : \Vectt(A, m), ih : C[m, w] \vdash_{r;\,b_{\mathsf{cons}}(m)} c_{\mathsf{cons}} : C[\smark(m), \mathsf{cons}(a, w)]}
          {\ctx \vdash_{r;\,b_v \oplus b_{\mathsf{nil}} \oplus \delta_{\vecrec} \oplus \mathsf{sum}_{i < n}(b_{\mathsf{cons}}(i) \oplus \delta_{\vecrec})} \vecrec(v, c_{\mathsf{nil}}, c_{\mathsf{cons}}) : C[n, v]} \quad (\textsc{Vec-Elim})
\end{mathpar}

For constant $b_{\mathsf{cons}}$, the synthesized bound is linear in $n$ (in particular it includes the per-step overhead $\delta_{\vecrec}$), hence $O(n)$.

\subsection{Finite Types}
\label{sec:fin}

$\Fin(n)$ is the type with exactly $n$ elements, useful for bounded indexing.

\begin{mathpar}
\inferrule{\ctx \vdash_{r;\,b_n} n : \Nat}
          {\ctx \vdash_{r;\,b_n \oplus \delta_{\Fin}} \Fin(n) \; \mathsf{type}} \quad (\Fin\text{-Form})

\inferrule{\ctx \vdash_{r;\,b_n} n : \Nat}
          {\ctx \vdash_{r;\,\delta_{\mathsf{fz}}} \mathsf{fzero} : \Fin(\smark(n))} \quad (\Fin\text{-Intro}_\mathsf{fz})

\inferrule{\ctx \vdash_{r;\,b_i} i : \Fin(n)}
          {\ctx \vdash_{r;\,b_i \oplus \delta_{\mathsf{fs}}} \mathsf{fsucc}(i) : \Fin(\smark(n))} \quad (\Fin\text{-Intro}_\mathsf{fs})
\end{mathpar}

\subsection{Safe Indexing}
\label{sec:safe-index}

With $\Vectt$ and $\Fin$, we can define safe indexing with precise bounds:

\begin{example}[Safe vector lookup]
\[
\mathsf{lookup} : \Pitype_{n:\Nat}^{\bot} \Pitype_{A:\U}^{\bot} \Vectt(A, n) \to^{c_1} \Fin(n) \to^{c_2 \cdot k} A
\]
where $k$ is the index value. The cost depends on where in the vector we look.

More precisely, with direct indexing:
\[
\mathsf{lookup} : \Pitype_{n:\Nat}^{\bot} \Pitype_{A:\U}^{\bot} \Vectt(A, n) \to \Fin(n) \to^{c} A
\]
with constant cost $c$ if we have $O(1)$ array access.
\end{example}

\subsection{The Graded Modality}

We retain the graded modality $\Box_r$ from \cite{mannucci2025rbtt}:

\begin{mathpar}
\inferrule{\ctx \vdash_{r;\,b} A \; \mathsf{type} \\ s \in L}
          {\ctx \vdash_{r;\,b \oplus \delta_\Box} \Box_s A \; \mathsf{type}} \quad (\Box\text{-Form})

\inferrule{\ctx \vdash_{r;\,b} t : A \\ b \preceq s}
          {\ctx \vdash_{r;\,b} \mathsf{box}_s(t) : \Box_s A} \quad (\Box\text{-Intro})

\inferrule{\ctx \vdash_{r;\,b} t : \Box_s A}
          {\ctx \vdash_{r;\,b \oplus \delta_{\mathsf{unbox}}} \mathsf{unbox}(t) : A} \quad (\Box\text{-Elim})

\inferrule{\ctx \vdash_{r;\,b} t : \Box_{s_1} A \\ s_1 \preceq s_2}
          {\ctx \vdash_{r;\,b} t : \Box_{s_2} A} \quad (\Box\text{-Mono})
\end{mathpar}

\subsection*{RB-TT as a fragment of RB-MLTT}

The original resource-bounded type theory RB--TT of~\cite{mannucci2025rbtt} can be seen as the simple-type fragment of the present system.

\begin{proposition}[RB--TT fragment]
\label{prop:rbtt-fragment}
If we restrict RB--MLTT to:
\begin{itemize}
  \item non-dependent function and product types (no $\Pi$-types or $\Sigma$-types depending on terms),
  \item no universes and no inductive families (only the base types of~\cite{mannucci2025rbtt}),
  \item the same graded modality $\Box_r$ and lattice $(L,\preceq,\oplus,\bot)$ as in~\cite{mannucci2025rbtt},
\end{itemize}
and we read a judgement $\ctx \vdash_{r;\,b} t : A$ as an RB--TT typing judgement with grade $b$, then the typing rules of RB--MLTT reduce to those of RB--TT.
Moreover, the operational cost semantics and soundness theorem specialize to the corresponding results in~\cite{mannucci2025rbtt}.
\end{proposition}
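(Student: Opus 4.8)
The plan is to establish a rule-by-rule correspondence between derivations of RB--MLTT restricted to the stated fragment and derivations of RB--TT, exhibiting the fragment as a definitional sub-theory. First I would fix the restriction precisely: the only type formers permitted are the base types of~\cite{mannucci2025rbtt}, non-dependent function types $A \to^{b} B$, non-dependent products $A \times B$, and the modality $\modal{s}$; correspondingly the only rules that survive are $(\Pi\text{-Form})$, $(\Pi\text{-Intro})$, $(\Pi\text{-Elim})$, $(\Pi\text{-}\beta)$, the $\Sigma$-rules, the $\modal{s}$-rules, and context formation $(\text{Ctx-Empty})$, $(\text{Ctx-Ext})$. The rules for universes, $\IdType$, $\Nat$, $\Vectt$, and $\Fin$ never fire, because their premises or conclusions mention type formers excluded from the fragment.

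The heart of the argument is to observe two collapses. When $B$ does not depend on $x$, the capture-avoiding substitution $B[x := a]$ is the identity, so the conclusion of $(\Pi\text{-Elim})$ is literally $B$ and no dependent substitution is needed; likewise $(\Sigma\text{-Elim}_2)$ degenerates to a projection into the constant fibre $B$. When the bound function $b(x)$ is a constant $b$, the evaluated bound $b(a)$ in $(\Pi\text{-Elim})$ equals $b$, so the synthesized cost $b_f \oplus b_a \oplus b(a) \oplus \delta_{\mathsf{app}}$ becomes exactly $b_f \oplus b_a \oplus b \oplus \delta_{\mathsf{app}}$, the RB--TT application bound once the overhead constant $\delta_{\mathsf{app}}$ is instantiated to the one used in~\cite{mannucci2025rbtt}. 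I would carry out this matching for each surviving rule, checking that the $\oplus$-composition order and each $\delta$ coincide with RB--TT; the modality rules and the lattice $(L,\leqR,\oplus,\bot)$ are already syntactically identical and transfer verbatim.

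Next I would verify that the fragment is closed under its own rules: by induction on derivations, every judgement derivable from fragment premises has a conclusion whose types again lie in the fragment, so one cannot smuggle in a dependent or inductive type through a side premise. This uses that $(\text{Ctx-Ext})$ extends contexts only by fragment types, so the non-dependence hypothesis propagates along the context. Together with the reverse translation---reading each RB--TT rule as the corresponding constant-bound, non-dependent instance of an RB--MLTT rule---this yields a bijection between derivations, hence the claimed reduction of the typing rules.

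Finally, for the ``moreover'' clause I would note that the cost-annotated operational semantics is defined by structural recursion on terms, and that fragment terms reduce only via $(\Pi\text{-}\beta)$ and the product/projection equations, whose cost annotations match those of~\cite{mannucci2025rbtt} under the same instantiation of the $\delta$ constants; the cost soundness statement (Theorem~\ref{thm:cost-soundness}), being proved by induction on typing derivations, then restricts clause-by-clause to the RB--TT soundness theorem. The main obstacle I anticipate is not any single step but the disciplined bookkeeping of the overhead constants together with verifying that definitional equality restricts cleanly---that is, that the entanglement of conversion with typing in the dependent setting does not force a derivation in the fragment to detour through a genuinely dependent judgement. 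Ruling out such a detour is precisely what the closure induction of the previous paragraph secures.
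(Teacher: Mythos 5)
Your proposal is correct, and for the typing-rule part it follows essentially the same route as the paper's (quite brief) proof sketch: the paper simply observes that erasing dependency and the extra type formers yields exactly the RB--TT rules, which is the high-level summary of your rule-by-rule matching via the two collapses ($B[x:=a]=B$ when $B$ is non-dependent, $b(a)=b$ when the bound is constant). Your additions beyond the sketch are genuine improvements in rigor: the explicit closure induction showing the fragment cannot detour through a dependent judgement (including via conversion), and the bookkeeping of the $\delta$ overhead constants, neither of which the paper spells out. Where you genuinely diverge is the ``moreover'' clause. You argue syntactically: the operational semantics and the proof of Theorem~\ref{thm:cost-soundness} are both by structural/derivation induction, so they restrict clause-by-clause to the RB--TT results. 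The paper instead argues semantically: the groupoid-valued presheaf model $\mathbf{Gpd}^L$ of Section~\ref{sec:semantics} restricts to the set-valued model $\mathbf{Set}^L$ of~\cite{mannucci2025rbtt} on discrete groupoids and non-dependent types, so cost soundness \emph{and canonicity} specialize at once. Your syntactic route is self-contained and does not presuppose the model construction; the paper's semantic route gets the specialization of canonicity for free and avoids re-running the soundness induction. One small caution: your claim of a \emph{bijection} between derivations is stronger than needed and slightly delicate (structural rules such as $(\Box\text{-Mono})$ can be applied redundantly, so the correspondence is naturally between derivable judgements, or derivations up to such permutations, rather than raw derivation trees); the proposition only requires that the rules reduce, so you could safely weaken this claim without affecting the argument.
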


\begin{proof}[Proof sketch]
The RB--MLTT rules are obtained from the RB--TT rules by:
\begin{enumerate}
  \item adding dependent context extension and dependent type formers ($\Pi$, $\Sigma$, identity types, universes, inductive families);
  \item threading the same lattice $(L,\preceq,\oplus,\bot)$ and graded modality $\Box_r$ through the extended syntax;
  \item refining cost bounds so that, in the simple-type fragment, every rule coincides with its RB--TT counterpart.
\end{enumerate}
Erasing dependency and the additional type formers yields exactly the RB--TT system.
On the semantic side, the groupoid-valued presheaf model developed in \Cref{sec:semantics} restricts to the set-valued presheaf model of~\cite{mannucci2025rbtt} when we consider only discrete groupoids and non-dependent types, so cost soundness and canonicity specialize to the earlier results.
\end{proof}

\section{Size-Dependent Bound Synthesis}
\label{sec:size-bounds}

The key feature of RB-MLTT is that bounds can be \emph{functions} of size indices.

\subsection{Bound Expressions}

\begin{definition}[Bound language]
Bounds $b$ are elements of $L$ or functions into $L$:
\begin{align*}
b &::= r \mid \bot \mid b_1 \oplus b_2 \mid b_1 \sqcup b_2 \mid c \cdot n \mid b(t) \mid \mathsf{sum}_{i < n} b(i)
\end{align*}
where $r \in L$, $c$ is a constant, $n$ is a natural number term, and $b(t)$ is bound application. We also use the derived notation $n \otimes b$ for the $n$-fold application of $\oplus$ to a bound $b$: $0 \otimes b \triangleq \bot$ and $(n{+}1)\otimes b \triangleq b \oplus (n \otimes b)$. We also use the derived notation $n \otimes b$ for the $n$-fold application of $\oplus$ to a bound $b$: $0 \otimes b \triangleq \bot$ and $(n{+}1)\otimes b \triangleq b \oplus (n \otimes b)$.
\end{definition}

\begin{example}[Common bound patterns]
\begin{itemize}
  \item $O(1)$: constant bound $c \in L$
  \item $O(n)$: linear bound $c \cdot n$ for size $n$
  \item $O(n^2)$: quadratic bound $c \cdot n \cdot n$
  \item $O(\log n)$: logarithmic bound $c \cdot \lceil \log_2 n \rceil$
  \item $O(n \log n)$: linearithmic bound $c \cdot n \cdot \lceil \log_2 n \rceil$
\end{itemize}
\end{example}

\subsection{Typing with Size-Dependent Bounds}

\begin{example}[Vector sum with linear bound]
\begin{align*}
&\mathsf{sum} : \Pitype_{n:\Nat}^{\bot} \Vectt(\Nat, n) \to^{c \cdot n + d} \Nat \\
&\mathsf{sum} = \lambda n.\, \lambda v.\, \vecrec(v, \zmark, \lambda m.\lambda a.\lambda w.\lambda ih.\, a + ih)
\end{align*}

The bound $c \cdot n + d$ is synthesized from:
\begin{itemize}
  \item Base case: cost $d_0$ for returning $\zmark$
  \item Recursive case: cost $c$ per element (addition)
  \item Total: $d_0 + n \cdot c = c \cdot n + d$ where $d = d_0$
\end{itemize}
\end{example}

\begin{example}[Vector map with linear bound]
\begin{align*}
&\mathsf{map} : \Pitype_{n:\Nat}^{\bot} \Pitype_{A, B:\U}^{\bot} (A \to^{c_f} B) \to \Vectt(A, n) \to^{c_f \cdot n + d} \Vectt(B, n) \\
&\mathsf{map} = \lambda n.\lambda A.\lambda B.\lambda f.\lambda v.\, \vecrec(v, \mathsf{nil}, \lambda m.\lambda a.\lambda w.\lambda ih.\, \mathsf{cons}(f\,a, ih))
\end{align*}

The bound is $c_f \cdot n + d$: we apply $f$ (cost $c_f$) to each of $n$ elements.
\end{example}

\subsection{Bound Inference Rules}

For structured recursion, bounds are computed systematically:

\begin{proposition}[Linear recursion bound]
If $\natrec(n, c_z, c_s)$ has:
\begin{itemize}
  \item Base case cost $b_z$
  \item Step case cost $b_s$ (constant per step)
\end{itemize}
Then the total bound is $b_z \oplus (n \otimes (b_s \oplus \delta_{\natrec})) \oplus \delta_{\natrec}$ where $n \otimes b_s$ denotes $n$-fold $\oplus$ of $b_s$.
\end{proposition}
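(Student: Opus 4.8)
The plan is to specialize the general $(\Nat\text{-Elim})$ rule to the constant step-cost case and then simplify using the algebraic laws of the resource lattice. First I would recall that $(\Nat\text{-Elim})$ synthesizes the bound $b_n \oplus b_z \oplus \delta_{\natrec} \oplus \mathsf{sum}_{i<n}(b_s(i) \oplus \delta_{\natrec})$, so everything reduces to understanding the summation term under the hypothesis that the step cost is constant.

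The core step is to show that, when $b_s(i) = b_s$ does not depend on $i$, the iterated sum collapses to the $n$-fold iterate $\mathsf{sum}_{i<n}(b_s \oplus \delta_{\natrec}) = n \otimes (b_s \oplus \delta_{\natrec})$. I would prove this by induction on $n$, directly unfolding the two definitions. For $n = 0$ both sides equal $\bot$, by the convention for the empty sum and the base clause $0 \otimes (-) \triangleq \bot$. For the successor case, the recursive clause $(k{+}1)\otimes c \triangleq c \oplus (k \otimes c)$ matches the one-step unfolding of the sum, and the inductive hypothesis closes the gap.

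It then remains to reconcile the resulting expression $b_n \oplus b_z \oplus \delta_{\natrec} \oplus (n \otimes (b_s \oplus \delta_{\natrec}))$ with the stated form $b_z \oplus (n \otimes (b_s \oplus \delta_{\natrec})) \oplus \delta_{\natrec}$. Since the proposition isolates the recursion bound proper, I take the scrutinee to be a variable or value, so $b_n = \bot$ and drops out by the unit law. The remaining discrepancy is purely one of ordering, which I would settle by invoking associativity and commutativity of $\oplus$ together with the neutrality of $\bot$.

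The main obstacle is precisely this last reordering step: it is sound as a strict equality exactly when $\oplus$ is commutative, which is the intended reading for a lattice modeling additive sequential cost. If one instead works with a non-commutative $\oplus$, the identity must be read up to the $\preceq$-order, i.e. the stated bound over-approximates the synthesized one, which is all that the cost soundness theorem (Theorem~\ref{thm:cost-soundness}) requires. I would therefore state the result modulo the lattice laws, singling out the commutative case as the one in which the two expressions coincide on the nose.
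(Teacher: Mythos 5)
Your proposal is correct and follows the same route the paper itself takes, to the extent the paper takes one at all: the proposition is stated without a separate proof, its implicit justification being the specialization of the $(\Nat\text{-Elim})$ rule together with the remark immediately following that rule, namely that for constant $b_s$ the term $\mathsf{sum}_{i<n}(b_s(i)\oplus\delta_{\natrec})$ becomes $n$ copies of $b_s\oplus\delta_{\natrec}$ and the full bound carries one additional base overhead $\delta_{\natrec}$. Your induction on $n$ collapsing $\mathsf{sum}_{i<n}(b_s\oplus\delta_{\natrec})$ to $n\otimes(b_s\oplus\delta_{\natrec})$ makes explicit exactly the step the paper leaves implicit, and your caveat about commutativity of $\oplus$ is a genuine refinement: the paper silently reorders summands to reach the stated form $b_z\oplus(n\otimes(b_s\oplus\delta_{\natrec}))\oplus\delta_{\natrec}$, which is an on-the-nose equality only under associativity and commutativity (clearly the intended reading, given that the paper manipulates bounds like $3n+2$ with ordinary arithmetic), so flagging the non-commutative case as holding only up to $\preceq$ is more careful than the source.

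One small inaccuracy: you dispose of the scrutinee cost $b_n$ by taking the scrutinee to be ``a variable or value, so $b_n=\bot$.'' In this calculus values are not cost-free at the \emph{typing} level: the rules $(\Nat\text{-Intro}_\zmark)$ and $(\Nat\text{-Intro}_\smark)$ charge $\delta_\zmark$ and $\delta_\smark$, so a numeral $\smark^m(\zmark)$ has synthesized bound $\delta_\zmark\oplus(m\otimes\delta_\smark)$, not $\bot$; only the operational rule $(\text{Val})$ assigns evaluation cost $\bot$ to values. The cleaner reading --- and evidently the paper's, since the proposition's statement simply omits $b_n$ --- is that the proposition isolates the recursion's own contribution, i.e.\ the bound modulo the scrutinee's cost, which then adds on by $\oplus$. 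This does not affect the substance of your argument, but the justification for dropping $b_n$ should be ``the proposition is stated relative to the scrutinee'' rather than ``values cost $\bot$.''
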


\begin{proposition}[Divide-and-conquer bound]
If a function on input of size $n$ has:
\begin{itemize}
  \item Division cost $d$
  \item Recursive call on size $\lfloor n/2 \rfloor$
  \item Combination cost $c$
\end{itemize}
Then the total bound satisfies $T(n) = d \oplus T(\lfloor n/2 \rfloor) \oplus c$, giving $T(n) = O(\log n) \cdot (d \oplus c)$.
\end{proposition}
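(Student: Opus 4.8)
The plan is to solve the recurrence by \emph{level counting} rather than by invoking a full master theorem: because there is a single recursive call on $\lfloor n/2\rfloor$ (no branching into several subproblems), the total cost is just the per-level cost $d \oplus c$ accumulated over the recursion depth, and that depth is logarithmic. No geometric-series reasoning is needed, so the abstract monoid $(L,\oplus,\bot)$ together with monotonicity of $\oplus$ suffices; the "$\cdot$" in $O(\log n)\cdot(d\oplus c)$ is read via the iterated operation $\otimes$.

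First I would establish an \emph{unfolding lemma} by induction on $k$: for every $n$ and every $k \geq 0$,
\[
T(n) \;=\; \bigl(k \otimes (d \oplus c)\bigr) \;\oplus\; T\!\left(\lfloor n/2^{k}\rfloor\right).
\]
The base case $k = 0$ uses $0 \otimes (d\oplus c) = \bot$ and the fact that $\bot$ is the identity for $\oplus$. The inductive step expands $T(\lfloor n/2^{k}\rfloor) = d \oplus T(\lfloor \lfloor n/2^{k}\rfloor /2\rfloor) \oplus c$ and then invokes the standard nested-floor identity $\lfloor \lfloor n/2^{k}\rfloor / 2\rfloor = \lfloor n/2^{k+1}\rfloor$, recollecting the fresh $d$ and $c$ into one further copy of $d \oplus c$ by the recursive definition of $\otimes$.

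Next I would set $k := h(n)$, the recursion depth, defined as the least $k$ with $\lfloor n/2^{k}\rfloor = 0$; for $n \geq 1$ this is $h(n) = \lfloor \log_2 n\rfloor + 1$, a routine induction on $n$. Substituting into the unfolding lemma collapses the residual term to the terminating base-case cost $T(0) = T_0$, yielding the closed form $T(n) = \bigl(h(n) \otimes (d \oplus c)\bigr) \oplus T_0$. Finally, monotonicity of $\otimes$ in its natural-number argument (a one-line induction from monotonicity of $\oplus$) together with $h(n) \leq \log_2 n + 1$ gives
\[
T(n) \;\preceq\; \bigl(\lceil \log_2 n\rceil + 1\bigr)\otimes (d\oplus c) \;\oplus\; T_0,
\]
which is exactly the claimed $O(\log n)\cdot(d\oplus c)$ bound once the constant base term $T_0$ is absorbed into the asymptotic notation.

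The step I expect to be the main obstacle is the recollection inside the unfolding lemma. If $\oplus$ is read as genuinely sequential (non-commutative) composition, the literal unfolding produces $(k\otimes d)\oplus T_0 \oplus (k\otimes c)$, with every $d$ on the left of the base case and every $c$ on its right, and collapsing this to $k\otimes(d\oplus c)$ requires commutativity of $\oplus$. I would therefore either invoke commutativity of the cost monoid (as holds for the standard additive time and space lattices) or, when commutativity is not assumed, state the bound in the equivalent split form $(k\otimes d)\oplus(k\otimes c)$, which is still $O(\log n)\cdot(d\oplus c)$ up to reordering. The remaining ingredients---the nested-floor identity and the monotonicity of $\oplus$ and $\otimes$---are entirely routine.
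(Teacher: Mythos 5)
The paper states this proposition without any proof --- it is asserted directly from the recurrence, with no argument supplied --- so your proposal is not a variant of the paper's route but a filling-in of a gap the paper leaves open. Your level-counting argument is correct and is the natural formalization: the unfolding lemma $T(n) = \bigl(k \otimes (d \oplus c)\bigr) \oplus T(\lfloor n/2^{k}\rfloor)$, the nested-floor identity, the depth computation $h(n) = \lfloor \log_2 n\rfloor + 1$, and monotonicity of $\otimes$ in its numeric argument (which follows from $\bot$ being least and $\oplus$ monotone, since $x = \bot \oplus x \preceq b \oplus x$) are all sound, and your reading of the informal ``$O(\log n) \cdot (d \oplus c)$'' via the paper's own derived notation $n \otimes b$ is exactly the intended one. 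Your commutativity caveat is well taken and correctly resolved: the literal unfolding of a sequential $\oplus$ yields $(k \otimes d) \oplus T_0 \oplus (k \otimes c)$, and the split form is the right fallback when $\oplus$ is not assumed commutative; since the paper's lattices ($\oplus$ as addition or max) are commutative in all its examples, either form discharges the claim. One small repair: as stated, your unfolding lemma quantifies over \emph{every} $k \geq 0$, but the equality fails for $k > h(n)$, since the recurrence cannot be applied at the base case ($T(0) = d \oplus T(0) \oplus c$ is false in general); restrict the lemma to $k \leq h(n)$, which is all you use. You might also note explicitly that the single (non-branching) recursive call is what makes per-level accumulation suffice --- the paper's mergesort recurrence $T(n) = 2T(n/2) + O(n)$ elsewhere in \S 7 genuinely needs the level-sum argument your approach deliberately avoids here.
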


\section{Operational Semantics}
\label{sec:operational}

We extend the big-step semantics from \cite{mannucci2025rbtt} to dependent types.

\subsection{Values}

\begin{definition}[Values]
\begin{align*}
v \in \Val ::= &\; \lambda x.t \mid (v_1, v_2) \mid \refl_v \mid \zmark \mid \smark(v) \\
              &\mid \mathsf{nil} \mid \mathsf{cons}(v_1, v_2) \mid \mathsf{fzero} \mid \mathsf{fsucc}(v) \\
              &\mid \mathsf{box}_s(v) \mid \U_r \mid \Pitype_{x:A}^{b} B \mid \ldots
\end{align*}
\end{definition}

\subsection{Evaluation Rules}

We write $t \eval v \triangleright k$ for ``$t$ evaluates to $v$ with cost $k$''.

\begin{mathpar}
\inferrule{ }{v \eval v \triangleright \bot} \quad (\text{Val})

\inferrule{f \eval \lambda x.t \triangleright k_f \\ 
           a \eval v_a \triangleright k_a \\ 
           t[x := v_a] \eval v \triangleright k_t}
          {f\,a \eval v \triangleright k_f \oplus k_a \oplus k_t \oplus \delta_{\mathsf{app}}} \quad (\text{App})

\inferrule{t \eval v_t \triangleright k_t \\ 
           u \eval v_u \triangleright k_u}
          {(t, u) \eval (v_t, v_u) \triangleright k_t \oplus k_u} \quad (\text{Pair})

\inferrule{p \eval (v_1, v_2) \triangleright k}
          {\pi_1(p) \eval v_1 \triangleright k \oplus \delta_{\pi_1}} \quad (\text{Fst})

\inferrule{p \eval (v_1, v_2) \triangleright k}
          {\pi_2(p) \eval v_2 \triangleright k \oplus \delta_{\pi_2}} \quad (\text{Snd})
\end{mathpar}

\paragraph{Natural number recursion.}

\begin{mathpar}
\inferrule{n \eval \zmark \triangleright k_n \\ 
           c_z \eval v \triangleright k_z}
          {\natrec(n, c_z, c_s) \eval v \triangleright k_n \oplus k_z \oplus \delta_{\natrec}} \quad (\natrec\text{-}\zmark)

\inferrule{n \eval \smark(v_n) \triangleright k_n \\ 
           \natrec(v_n, c_z, c_s) \eval v_{ih} \triangleright k_{ih} \\
           c_s[m := v_n, ih := v_{ih}] \eval v \triangleright k_s}
          {\natrec(n, c_z, c_s) \eval v \triangleright k_n \oplus k_{ih} \oplus k_s \oplus \delta_{\natrec}} \quad (\natrec\text{-}\smark)
\end{mathpar}

\paragraph{Vector recursion.}

\begin{mathpar}
\inferrule{v \eval \mathsf{nil} \triangleright k_v \\ 
           c_{\mathsf{nil}} \eval u \triangleright k_{\mathsf{nil}}}
          {\vecrec(v, c_{\mathsf{nil}}, c_{\mathsf{cons}}) \eval u \triangleright k_v \oplus k_{\mathsf{nil}} \oplus \delta_{\vecrec}} \quad (\vecrec\text{-nil})

\inferrule{v \eval \mathsf{cons}(v_a, v_w) \triangleright k_v \\ 
           \vecrec(v_w, c_{\mathsf{nil}}, c_{\mathsf{cons}}) \eval v_{ih} \triangleright k_{ih} \\
           c_{\mathsf{cons}}[a := v_a, w := v_w, ih := v_{ih}] \eval u \triangleright k_c}
          {\vecrec(v, c_{\mathsf{nil}}, c_{\mathsf{cons}}) \eval u \triangleright k_v \oplus k_{ih} \oplus k_c \oplus \delta_{\vecrec}} \quad (\vecrec\text{-cons})
\end{mathpar}

\paragraph{Identity elimination.}

\begin{mathpar}
\inferrule{p \eval \refl_v \triangleright k_p \\ 
           d \eval v_d \triangleright k_d}
          {\J(p, d) \eval v_d \triangleright k_p \oplus k_d \oplus \delta_\J} \quad (\J\text{-}\refl)
\end{mathpar}

\section{Metatheory}
\label{sec:metatheory}

\subsection{Substitution}

\begin{lemma}[Substitution]
\label{lem:substitution}
If $\ctx, x : A \vdash_{r;\,b} t : B$ and $\ctx \vdash_{r;\,b_a} a : A$, then $\ctx \vdash_{r;\,b[x := a]} t[x := a] : B[x := a]$.
\end{lemma}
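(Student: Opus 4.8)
The plan is to argue by induction on the derivation of $\ctx, x : A \vdash_{r;\,b} t : B$, carried out simultaneously with the analogous statements for the other three judgement forms (context formation, type formation, and definitional equality), since in MLTT with a conversion rule these are mutually defined. Before the induction proper I would isolate the one fact that makes the resource annotations behave, namely that syntactic substitution $(-)[x := a]$ is a homomorphism on the bound language of Section~\ref{sec:size-bounds}. Because bounds are generated from $\bot$, $\oplus$, $\sqcup$, $c \cdot n$, $\mathsf{sum}_{i<n}(-)$, and bound application, and substitution is defined compositionally, one checks $(b_1 \oplus b_2)[x := a] = b_1[x := a] \oplus b_2[x := a]$, and likewise for $\sqcup$, for $\bot[x:=a] = \bot$, and for $(\mathsf{sum}_{i<n}\, b(i))[x := a] = \mathsf{sum}_{i < n[x:=a]}\, (b(i)[x:=a])$ after renaming $i$ away from $a$. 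This auxiliary lemma lets every inductive step push the substitution through the synthesized bound and recover exactly the quantity demanded by the conclusion, $b[x := a]$.

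With that in hand the congruence cases are routine. For each introduction or elimination rule ($\Pitype$, $\Sigma$, identity, $\Nat$, $\Vectt$, $\Fin$, $\modal{s}$, and the universe rules) I would apply the induction hypothesis to each premise, invoke the standard facts that substitution commutes with the term- and type-formers, for instance $(f\,a')[x:=a] = (f[x:=a])\,(a'[x:=a])$ and the substitution-composition identity $(B[y := a'])[x := a] = (B[x := a])[y := a'[x:=a]]$ for $x \neq y$, and then reassemble the conclusion, using the homomorphism lemma to rewrite the synthesized bound into $b[x:=a]$. Moving $a$ underneath the binders introduced by $\lambda$, by $\Sigma$-elimination, or by an eliminator's motive requires weakening, so I would either prove weakening first or fold it into the same mutual induction.

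The conceptually subtle point is the variable case $t = x$. Here $t[x:=a] = a$ and $B[x:=a] = A$ (as $A$ lives in $\ctx$ and does not mention $x$), so the conclusion demands $\ctx \vdash_{r;\,b[x:=a]} a : A$, which must coincide with the hypothesis $\ctx \vdash_{r;\,b_a} a : A$. This forces the reading that the bound recorded at a variable occurrence \emph{is} that variable's cost annotation, so that substituting $a$ replaces it by $b_a$; in other words this is the one place where it is the resource annotation, and not merely the underlying term, that is substituted, and one must verify that $b[x:=a] = b_a$ is exactly the instance produced by this convention. For the off-variable case $t = y \neq x$ the bound does not mention $x$ and the step is immediate.

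I expect the genuine labour to lie in the eliminator cases $\natrec$, $\vecrec$, and $\J$, whose bounds carry both a $\mathsf{sum}_{i<n}(-)$ term and a motive $C$ typed in an extended context. There one must check that the upper limit $n[x:=a]$ and the per-step bounds $b_s(i)[x:=a]$ substitute coherently, which is precisely where the $\mathsf{sum}$ clause of the homomorphism lemma is used, and that substitution into the motive, which sits under two or three additional binders, commutes via the substitution-composition identity together with exchange and weakening. The bookkeeping of which variable is substituted under which binder, and the freshness side-conditions keeping $i$, $m$, $w$, and $ih$ disjoint from the free variables of $a$, is the part most likely to hide a capture or an off-by-one error, so I would treat these three rules as the crux and verify the bound equalities symbol by symbol.
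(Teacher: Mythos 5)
Your proposal is correct and follows the same overall route as the paper---induction on the typing derivation---but the paper's entire proof is a two-sentence sketch (``by induction on the typing derivation\ldots after substitution, $b[x:=a]$ is the bound evaluated at $a$''), so essentially all the substance in your write-up is material the paper leaves implicit. The bound-substitution homomorphism lemma and the mutual induction with the other three judgement forms are standard scaffolding that any complete proof would need, and the paper simply takes them for granted. Your variable case, however, is the one place where you and the paper genuinely diverge, and it is worth being precise about it. The paper's reading, visible in how the lemma is invoked in Theorems~\ref{thm:preservation} and~\ref{thm:cost-soundness}, is that $b[x:=a]$ means the size expression $b$ with $a$ plugged into size positions, the cost $b_a$ of the argument being charged separately at the application site via $b_f \oplus b_a \oplus b(a) \oplus \delta_{\mathsf{app}}$; under that reading the case $t = x$ demands $\ctx \vdash_{r;\,b[x:=a]} a : A$ with a bound that need not dominate $b_a$, and since the calculus has no general bound-subsumption rule this case closes only when $b_a$ is absorbed---the paper only ever substitutes values, and even that is not quite enough, since values here can carry nonzero typing bounds such as $\delta_\zmark$. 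Your convention---that the bound recorded at a variable occurrence is a placeholder replaced by $b_a$ under substitution---makes the variable case true by construction, at the price of making $b[x:=a]$ depend on the chosen derivation of $a$ rather than on $a$ alone, and of charging $b_a$ once per occurrence of $x$, which is not what the application rule does. These are two different lemmas; either can be made to work, but you should state explicitly which one you are proving, and note that the paper's downstream uses require the first reading (possibly supplemented by an admissible bound-weakening rule, which the paper never states). That you isolated this as the crux is exactly right; the paper's proof silently elides it.
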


\begin{proof}
By induction on the typing derivation, with cases for each type former. The bound $b$ may depend on $x$; after substitution, $b[x := a]$ is the bound evaluated at $a$.
\end{proof}

\subsection{Type Preservation}

\begin{theorem}[Preservation]
\label{thm:preservation}
If $\cdot \vdash_{r;\,b} t : A$ and $t \eval v \triangleright k$, then $\cdot \vdash_{r;\,b'} v : A$ for some $b' \preceq b$.
\end{theorem}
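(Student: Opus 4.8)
The plan is to argue by induction on the derivation of the evaluation judgement $t \eval v \triangleright k$, inverting the typing derivation of $t$ at each step and appealing to the induction hypothesis on the evaluation subderivations. Two auxiliary facts are needed throughout. First, a family of \emph{inversion lemmas}: for each canonical form, if a closed value is well-typed at $A$ then it was introduced by the matching introduction rule, up to definitional equality of the type. Second, a \emph{soundness of evaluation for definitional equality} lemma, namely that $s \eval w \triangleright k$ implies $\cdot \vdash s \equiv w$ at the appropriate type; this is proved by a separate induction on evaluation, with each step justified by the corresponding $\beta$-rule ($\Pi$-$\beta$, $\Nat$-$\beta$, $\J$-$\beta$, and the projection equations). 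It is this second lemma that lets us convert the type and the synthesized bound of the computed value back along the substitutions appearing in the dependent elimination rules.

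For the base rule $(\text{Val})$ the term is already a value and we take $b' = b$. The key nontrivial case is $(\text{App})$: given $\cdot \vdash_{r;\,b} f\,a : A$, inversion on $(\Pi\text{-Elim})$ yields $\cdot \vdash_{r;\,b_f} f : \Pitype_{x:A'}^{b(x)} B$, $\cdot \vdash_{r;\,b_a} a : A'$, $A = B[x := a]$, and $b = b_f \oplus b_a \oplus b(a) \oplus \delta_{\mathsf{app}}$. The induction hypothesis on $f \eval \lambda x.t$, after inverting $(\Pi\text{-Intro})$, gives the body judgement $\cdot, x:A' \vdash_{r;\,b(x)} t : B$; the induction hypothesis on $a \eval v_a$ gives $\cdot \vdash_{r;\,b_a'} v_a : A'$ with $b_a' \preceq b_a$. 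Lemma~\ref{lem:substitution} then yields $\cdot \vdash_{r;\,b(v_a)} t[x := v_a] : B[x := v_a]$, and the induction hypothesis on $t[x := v_a] \eval v$ produces $\cdot \vdash_{r;\,b'} v : B[x := v_a]$ with $b' \preceq b(v_a)$. Since $a \eval v_a$, the evaluation--equality lemma gives $a \equiv v_a$, hence $B[x := v_a] \equiv B[x := a] = A$ and $b(v_a) = b(a)$, so conversion retypes $v$ at $A$ and $b' \preceq b(a) \preceq b$. The pair and projection cases are analogous, with $(\text{Snd})$ again invoking $\pi_1(p) \eval v_1$ to convert $B[x := \pi_1(p)]$ to $B[x := v_1]$; the recursor cases $(\natrec\text{-}\zmark)$, $(\natrec\text{-}\smark)$, $(\vecrec\text{-nil})$, $(\vecrec\text{-cons})$ and $(\J\text{-}\refl)$ reduce to the corresponding $\beta$-equations and apply the induction hypothesis to the unfolded method, using the motive equalities $C[n] \equiv C[\zmark]$, $C[n] \equiv C[\smark(v_n)]$, and so on.

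The bound bookkeeping is then routine: every assembled $b'$ is built from sub-bounds that the induction hypothesis has already shown to be $\preceq$ the originals, and monotonicity of $\oplus$ and $\sqcup$ together with $\bot$ being least (used in $(\Pi\text{-Intro})$, where the returned lambda carries bound $\bot$) delivers $b' \preceq b$. The main obstacle is precisely the interaction of dependency with the cost discipline: both the result type and the synthesized bound of a subterm are expressed in terms of the \emph{syntactic} argument, while evaluation replaces that argument by a value, so the proof hinges on the evaluation--equality lemma and on the fact that the $L$-valued bound expression $b(x)$ (typed by the premise $\ctx, x:A \vdash_{r;\,b(x)} b(x) : L$ of $(\Pi\text{-Form})$) is stable under definitional equality, forcing $b(a)$ and $b(v_a)$ to coincide as lattice elements. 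Verifying this stability, and thereby tying conversion to the bound arithmetic, is the step that carries the real content of the argument.
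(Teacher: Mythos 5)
Your proof is correct and follows the same overall skeleton as the paper's---induction on the evaluation derivation, inversion of typing, the Substitution Lemma~\ref{lem:substitution} in the application case, and monotonicity of $\oplus$ for the bound bookkeeping---but you go one genuine step further. The paper's own (App) case stops at $\cdot \vdash_{r;\,b'} v : B[x := v_a]$ with $b' \preceq b(v_a)$, which does not literally match the theorem's conclusion: the stated type is $A = B[x := a]$ and the stated bound is $b = b_f \oplus b_a \oplus b(a) \oplus \delta_{\mathsf{app}}$, both phrased in terms of the \emph{syntactic} argument $a$ rather than its value $v_a$. Your evaluation-implies-definitional-equality lemma ($s \eval w \triangleright k$ gives $s \equiv w$, proved by induction on evaluation via the $\beta$-rules) together with the stability of the bound expression $b(x)$ under definitional equality is exactly the missing machinery that discharges this mismatch, and you correctly identify it as where the real content lies; the same repair is needed in the $(\text{Snd})$ case for $B[x := \pi_1(p)]$ versus $B[x := v_1]$ and in the recursor cases for the motive substitutions, which the paper's ``costs accumulate correctly'' dismisses. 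Two small caveats: the calculus as presented never states an explicit conversion rule for terms, so your appeal to ``conversion retypes $v$ at $A$'' silently adds one (standard for MLTT, but it should be made an official rule or lemma); and your claim that $b(a)$ and $b(v_a)$ \emph{coincide} as lattice elements is slightly stronger than needed---it suffices that definitionally equal bound expressions have equal (or $\preceq$-comparable) denotations in $L$, which is the precise form in which the stability lemma should be stated and proved. With those two points pinned down, your argument is a strictly more complete version of the paper's proof.
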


\begin{proof}
By induction on the evaluation derivation.

\paragraph{Case (App).} $t = f\,a$ with $f \eval \lambda x.s \triangleright k_f$, $a \eval v_a \triangleright k_a$, $s[x := v_a] \eval v \triangleright k_s$.

By inversion: $\cdot \vdash_{r;\,b_f} f : \Pitype_{x:A}^{b(x)} B$ and $\cdot \vdash_{r;\,b_a} a : A$.

By IH on $f$: $\cdot \vdash_{r;\,b'_f} \lambda x.s : \Pitype_{x:A}^{b(x)} B$ with $b'_f \preceq b_f$.

By IH on $a$: $\cdot \vdash_{r;\,b'_a} v_a : A$ with $b'_a \preceq b_a$.

By Lemma~\ref{lem:substitution}: $\cdot \vdash_{r;\,b(v_a)} s[x := v_a] : B[x := v_a]$.

By IH: $\cdot \vdash_{r;\,b'} v : B[x := v_a]$ with $b' \preceq b(v_a)$.

\paragraph{Case ($\natrec$-$\smark$).} By IH on recursive call and step, costs accumulate correctly.

\paragraph{Other cases.} Similar, using IH and the evaluation rules.
\end{proof}

\subsection{Cost Soundness}

\begin{theorem}[Cost soundness]
\label{thm:cost-soundness}
If $\cdot \vdash_{r;\,b} t : A$, then there exist $v \in \Val$ and $k \in L$ such that $t \eval v \triangleright k$ with $k \preceq b$.
\end{theorem}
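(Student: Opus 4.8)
The plan is to prove cost soundness by induction on the typing derivation $\cdot \vdash_{r;\,b} t : A$, establishing simultaneously that $t$ evaluates to some value $v$ with cost $k \preceq b$. The statement combines two claims: normalization (that a value exists) and the cost bound ($k \preceq b$). I would structure the induction so that each typing rule is matched against the corresponding evaluation rule, showing that the cost accumulated during evaluation is dominated by the synthesized bound.

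For the base cases, values like $\zmark$, $\mathsf{nil}$, $\refl_a$, and $\lambda$-abstractions evaluate to themselves via (Val) with cost $\bot$, and $\bot \preceq b$ holds trivially since $\bot$ is the lattice bottom. The introduction forms compose straightforwardly: for (App), the typing rule assigns bound $b_f \oplus b_a \oplus b(a) \oplus \delta_{\mathsf{app}}$, while the (App) evaluation rule produces cost $k_f \oplus k_a \oplus k_t \oplus \delta_{\mathsf{app}}$. First I would apply the IH to $f$ and $a$ to get $k_f \preceq b_f$ and $k_a \preceq b_a$; then, crucially, I invoke Preservation (Theorem~\ref{thm:preservation}) to know that $v_a : A$ with a bound $b'_a \preceq b_a$, and Substitution (Lemma~\ref{lem:substitution}) to type the body $t[x := v_a]$ at bound $b(v_a)$, so the IH on the substituted body yields $k_t \preceq b(v_a)$. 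Monotonicity of $\oplus$ with respect to $\preceq$ then assembles these into $k \preceq b$. The $\Sigma$, $\J$, and $\Box$ cases follow the same pattern of applying the IH to subterms and using monotonicity of $\oplus$.

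The genuinely delicate cases are the recursors $\natrec$ and $\vecrec$, where the synthesized bound contains $\mathsf{sum}_{i < n}(b_s(i) \oplus \delta_{\natrec})$ and the evaluation proceeds by a secondary induction on the numeral or the vector. Here the outer induction on the typing derivation does not immediately suffice, because the recursive evaluation call $\natrec(v_n, \ldots)$ is on a structurally smaller value, not a smaller derivation of the original term. My approach would be to prove an auxiliary lemma by induction on the height $n$ of the numeral (respectively the length of the vector): assuming $n$ evaluates to $\smark^n(\zmark)$ with cost $k_n$, the recursor evaluates with cost bounded by $k_n \oplus b_z \oplus \mathsf{sum}_{i<n}(k_s(i) \oplus \delta_{\natrec}) \oplus \delta_{\natrec}$, where each per-step cost $k_s(i)$ is controlled by $b_s(i)$ via the IH and Substitution applied to the step term $c_s$ at the appropriate smaller index. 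Matching this telescoping sum against the synthesized $\mathsf{sum}$-bound is the main obstacle, and it requires that $\mathsf{sum}_{i<n}$ distributes over $\oplus$ and is monotone in its summands.

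The principal subtlety throughout is the interaction between the operational cost (which sums \emph{actual} per-step costs $k_s(i)$ over the concrete unfolding) and the synthesized bound (which sums the \emph{static} bounds $b_s(i)$). I expect to need, as a standing assumption on the lattice, that $\oplus$ is monotone in both arguments and that $\mathsf{sum}_{i<n}$ preserves $\preceq$ pointwise; these are implicit in the lattice structure $(L, \preceq, \oplus, \sqcup, \bot)$ inherited from~\cite{mannucci2025rbtt}. A secondary concern is that the theorem asserts \emph{termination} (a value exists), which for the recursors depends on structural decrease of the index; since evaluation recurses only on $v_n$ (the predecessor) or the tail $v_w$, the auxiliary induction on numeral height or vector length guarantees this, and no separate strong-normalization argument is needed for the intensional fragment treated here. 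Once the recursor lemma is in place, the remaining cases are mechanical, and the theorem follows by assembling the per-rule bounds.
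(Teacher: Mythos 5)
Your proposal takes essentially the same route as the paper's proof: an outer induction on the typing derivation, with the application case discharged via Lemma~\ref{lem:substitution} and monotonicity of $\oplus$, and the recursor cases handled by a secondary induction on the numeral height (resp.\ vector length), exactly as in the paper's sub-case $v_n = \smark^m(\zmark)$. Your explicit appeals to Preservation (to type the argument value $v_a$) and to monotonicity of $\mathsf{sum}_{i<n}$ in its summands are sound refinements that merely make precise steps the paper leaves implicit.
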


\begin{proof}
By induction on the typing derivation.

\paragraph{Base cases.}
\begin{itemize}
  \item \textbf{Variables:} Contradicts empty context.
  \item \textbf{Lambda:} $\lambda x.t$ is a value; $\lambda x.t \eval \lambda x.t \triangleright \bot$ and $\bot \preceq b$.
  \item \textbf{Zero, nil, refl:} Values with cost $\bot$.
\end{itemize}

\paragraph{Inductive case (App).} $t = f\,a$ with:
\begin{align*}
&\cdot \vdash_{r;\,b_f} f : \Pitype_{x:A}^{b(x)} B \\
&\cdot \vdash_{r;\,b_a} a : A \\
&b = b_f \oplus b_a \oplus b(a) \oplus \delta_{\mathsf{app}}
\end{align*}

By IH: $f \eval \lambda x.s \triangleright k_f$ with $k_f \preceq b_f$.

By IH: $a \eval v_a \triangleright k_a$ with $k_a \preceq b_a$.

From typing $\lambda x.s$: $x : A \vdash_{r;\,b(x)} s : B$, so $\cdot \vdash_{r;\,b(v_a)} s[x := v_a] : B[x := v_a]$.

By IH: $s[x := v_a] \eval v \triangleright k_s$ with $k_s \preceq b(v_a)$.

By (App): $f\,a \eval v \triangleright k_f \oplus k_a \oplus k_s \oplus \delta_{\mathsf{app}}$.

By monotonicity: $k_f \oplus k_a \oplus k_s \oplus \delta_{\mathsf{app}} \preceq b_f \oplus b_a \oplus b(v_a) \oplus \delta_{\mathsf{app}} = b$.

\paragraph{Inductive case ($\natrec$).} Let $t = \natrec(n, c_z, c_s)$ with:
\begin{align*}
&\cdot \vdash_{r;\,b_n} n : \Nat \\
&\cdot \vdash_{r;\,b_z} c_z : C[\zmark] \\
&m : \Nat, ih : C[m] \vdash_{r;\,b_s(m)} c_s : C[\smark(m)]
\end{align*}

By IH: $n \eval v_n \triangleright k_n$ with $k_n \preceq b_n$ and $v_n$ is a numeral.

\textbf{Sub-case $v_n = \zmark$:} By IH, $c_z \eval v \triangleright k_z$ with $k_z \preceq b_z$.

Total: $k_n \oplus k_z \oplus \delta_{\natrec} \preceq b_n \oplus b_z \oplus \delta_{\natrec} \preceq b$.

\textbf{Sub-case $v_n = \smark^m(\zmark)$:} By induction on $m$:
\begin{itemize}
  \item Recursive call: cost bounded by $b_z \oplus \sum_{i=0}^{m-1} b_s(i) \oplus (m) \cdot \delta_{\natrec}$
  \item Step: cost $b_s(m-1)$
\end{itemize}

Total: $k_n \oplus b_z \oplus \sum_{i=0}^{m-1} b_s(i) \oplus m \cdot \delta_{\natrec} \preceq b$.

\paragraph{Inductive case ($\vecrec$).} Similar to $\natrec$, with cost linear in vector length.

\paragraph{Other cases.} Pairs, projections, identity, box follow the pattern from \cite{mannucci2025rbtt}.
\end{proof}

\begin{corollary}[Size-dependent bounds are sound]
If $\cdot \vdash_{r;\,c \cdot n + d} t : A$ where $n$ is a size parameter, then $t$ evaluates with cost at most $c \cdot n + d$.
\end{corollary}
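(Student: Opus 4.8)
The plan is to derive this as an immediate specialization of the cost soundness theorem (Theorem~\ref{thm:cost-soundness}); the only real work is to check that the symbolic bound $c \cdot n + d$ names a genuine element of the resource lattice $L$ to which that theorem applies. First I would fix the size parameter $n$ to a concrete numeral, so that the bound expression $c \cdot n + d$ --- which, in the bound language of \S\ref{sec:size-bounds}, abbreviates $(n \otimes c) \oplus d$, i.e.\ the $n$-fold $\oplus$-composition of the constant $c$ followed by $d$ --- denotes a definite element $b \in L$.

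Next I would instantiate Theorem~\ref{thm:cost-soundness} at this $b$: from $\cdot \vdash_{r;\,b} t : A$ the theorem produces a value $v \in \Val$ and a cost $k \in L$ with $t \eval v \triangleright k$ and $k \preceq b$. Unfolding $b = c \cdot n + d$ yields exactly $k \preceq c \cdot n + d$, which is precisely the assertion that $t$ terminates with cost at most $c \cdot n + d$ in the order $\preceq$ of the lattice. Notably, the $\natrec$ and $\vecrec$ inductive cases of Theorem~\ref{thm:cost-soundness} are what already produce synthesized bounds of the linear shape $(n \otimes b_s) \oplus b_z$, so bounds of exactly this form are the intended output of the soundness argument.

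There is essentially no obstacle beyond interpretation: the substantive termination-and-bounding reasoning has been carried out once and for all in Theorem~\ref{thm:cost-soundness}. The single point needing care is the reading of the infix $+$ and $\cdot$, since $L$ carries $\oplus$ and the derived operation $n \otimes (-)$ but no literal arithmetic; I would therefore make explicit, as a notational convention, that $c \cdot n + d$ \emph{means} $(n \otimes c) \oplus d$, after which $k \preceq c \cdot n + d$ is a statement purely about $\preceq$ and the corollary follows verbatim. If instead one prefers to read $n$ as a free variable ranging over numerals, the identical argument applies pointwise, first using the substitution lemma (Lemma~\ref{lem:substitution}) to instantiate $n$ at each concrete value before invoking cost soundness.
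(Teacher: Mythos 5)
Your proposal is correct and takes essentially the same route as the paper: the corollary is stated without a separate proof precisely because it is the direct instantiation of Theorem~\ref{thm:cost-soundness} at the bound $b = c \cdot n + d$, which is exactly what you do. Your two points of added care---reading $c \cdot n + d$ as a genuine element of $L$ via the bound language of \S\ref{sec:size-bounds}, and handling a free size parameter $n$ pointwise via Lemma~\ref{lem:substitution}---merely make explicit what the paper leaves implicit, and do not constitute a different argument.
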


\subsection{Canonicity}

\begin{theorem}[Canonicity]
\label{thm:canonicity}
If $\cdot \vdash_{r;\,b} t : A$ where $A$ is a closed type, then $t$ evaluates to a canonical form:
\begin{itemize}
  \item If $A = \Nat$, then $t \eval \smark^n(\zmark)$ for some $n \in \mathbb{N}$.
  \item If $A = \Vectt(B, n)$, then $t \eval \mathsf{cons}(v_1, \ldots, \mathsf{cons}(v_n, \mathsf{nil})\ldots)$.
  \item If $A = \Fin(n)$, then $t \eval \mathsf{fsucc}^k(\mathsf{fzero})$ for some $k < n$.
  \item If $A = \IdType_B(x, y)$ and $x \equiv y$, then $t \eval \refl_v$.
  \item If $A = \Pitype_{x:B}^{b(x)} C$, then $t \eval \lambda x.s$.
  \item If $A = \Sigma_{x:B} C$, then $t \eval (v_1, v_2)$.
\end{itemize}
\end{theorem}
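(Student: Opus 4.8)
The plan is to prove canonicity by Tait's method of computability (a logical relation), exploiting the fact that in the big-step semantics every value is already a constructor (introduction) form, so that canonicity reduces to controlling \emph{which} constructor a closed term evaluates to. Because the cost annotation $k$ in $t \eval v \triangleright k$ never constrains the shape of $v$, the reducibility predicate can ignore costs entirely; Theorem~\ref{thm:cost-soundness} already guarantees that every closed well-typed term evaluates, so the logical relation need only pin down the form of the resulting value.

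First I would define, by induction on closed types, a family of reducibility predicates $\mathsf{Red}_A(t)$ on closed terms. At base and inductive types the predicate directly encodes the desired canonical shape: $\mathsf{Red}_\Nat(t)$ holds iff $t \eval \smark^n(\zmark)$; $\mathsf{Red}_{\Vectt(B,n)}(t)$ holds iff $t$ evaluates to an $n$-fold $\mathsf{cons}$ of $\mathsf{Red}_B$-reducible entries terminated by $\mathsf{nil}$; the clauses for $\Fin(n)$ and $\IdType_B(x,y)$ are analogous. At the connectives the predicate is the usual conjunction/implication: $\mathsf{Red}_{\Pitype_{x:B}^{b(x)}C}(t)$ holds iff $t \eval \lambda x.s$ and $\mathsf{Red}_{C[x:=a]}(t\,a)$ for every reducible $a$, and $\mathsf{Red}_{\Sigma_{x:B}C}(t)$ iff $t \eval (v_1,v_2)$ with both projections reducible. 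I would then record two closure properties: reducibility depends only on the value that $t$ evaluates to (closure along $\eval$), and reducibility is invariant under definitional equality of the indexing type, so that the conversion rule is respected.

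Next I would prove the Fundamental Lemma: if $\ctx \vdash_{r;\,b} t : A$ and $\sigma$ is a closing substitution sending each variable to a reducible term of the appropriate instantiated type, then $\mathsf{Red}_{A[\sigma]}(t[\sigma])$. The proof is by induction on the typing derivation. The introduction rules are immediate from the constructor clauses; the elimination rules use that the scrutinee is reducible, hence evaluates to a constructor, and then invoke the corresponding $\beta$-rule of the operational semantics. The recursor cases ($\natrec$, $\vecrec$) require an auxiliary induction on the height of the numeral (resp. the length $n$ of the vector) to which the scrutinee reduces; this inner induction is well-founded precisely because reducibility of the scrutinee delivers a concrete finite canonical form. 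Instantiating the Fundamental Lemma at the empty context with the identity substitution yields $\mathsf{Red}_A(t)$, and reading off the clause of $\mathsf{Red}$ for each type former gives exactly the six cases in the statement.

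The main obstacle is making the definition of the logical relation well-founded across the resource-indexed universe hierarchy. Because $\U_s$ is indexed by an arbitrary lattice element $s \in L$ and is merely cumulative, the clause for $\mathsf{Red}_{\U_s}(A)$ --- which must say that $A$ evaluates to a type code whose decoding $\El(A)$ again carries a reducibility predicate --- cannot simply recurse on ``a smaller universe index'' unless $L$ supports a well-founded rank on the universes actually inhabited in a derivation. I would address this by stratifying the relation along a well-founded measure (lexicographically: universe rank, then syntactic size of the type), which is available for the predicative, intensional fragment treated here, and by defining $\mathsf{Red}_{\U_s}$ via reducibility candidates in the Girard/Tarski style rather than by naive recursion. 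An alternative route that sidesteps the universe induction is to combine Theorem~\ref{thm:preservation} and Theorem~\ref{thm:cost-soundness} with a canonical-forms lemma obtained by inversion on the typing of values; that route is shorter but shifts the difficulty onto no-confusion for definitional equality (that distinct type constructors are never identified by $\equiv$), which I would in turn derive from soundness of the presheaf model of \Cref{sec:semantics}.
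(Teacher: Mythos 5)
Your proposal is correct in outline but takes a genuinely different route from the paper. The paper's entire proof is a one-sentence appeal to ``induction on the typing derivation, using progress \ldots and the structure of eliminators'' --- i.e.\ essentially the \emph{alternative} route you mention in your final paragraph: termination from Theorem~\ref{thm:cost-soundness}, Theorem~\ref{thm:preservation}, and an implicit canonical-forms lemma obtained by inversion on the typing of values. You instead build a Tait-style reducibility predicate, which buys two things the paper's sketch does not. The eliminator cases ($\natrec$, $\vecrec$, $\J$) are discharged by an honest inner induction on the canonical form delivered by reducibility of the scrutinee, and, more importantly, you surface the two proof obligations the paper silently omits: no-confusion for $\equiv$ (needed for the inversion route; the paper never argues that distinct type constructors are not definitionally identified, and your plan to extract this from the model of \Cref{sec:semantics} is the right patch), and well-foundedness of the universe hierarchy. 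On the latter point your worry is well placed: since $\U_s$ is indexed by lattice elements with $\U_r : \U_{r \oplus \delta_\U}$ and cumulativity, nothing in the paper rules out $r \oplus \delta_\U = r$ (e.g.\ when $\oplus$ is idempotent or $\delta_\U = \bot$), in which case $\U_r : \U_r$ and naive recursion on the universe index is circular; your candidates-style stratification is the standard repair, though you should state the required hypothesis on $L$ (a well-founded rank on the universe indices actually occurring in a derivation) explicitly, since it is a genuine side condition the paper never imposes. One caveat on your primary route: you import termination from Theorem~\ref{thm:cost-soundness}, but that theorem's own proof proceeds by induction on typing and, in the application case, invokes the induction hypothesis on $s[x := v_a]$, which is not a subderivation --- precisely the gap a logical relation is normally introduced to close. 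It would be cleaner to let your relation establish evaluation itself (your clauses already assert it) rather than cite cost soundness for it; your argument would then repair, rather than inherit, that circularity.
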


\begin{proof}
By induction on the typing derivation, using progress (every closed well-typed term either is a value or can step) and the structure of eliminators.
\end{proof}

\section{Semantic Model}
\label{sec:semantics}

We extend the presheaf model from \cite{mannucci2025rbtt} to dependent types. The key challenges are:
\begin{enumerate}
  \item Dependent types require presheaves over categories of elements
  \item Martin-L\"of's identity type has non-trivial structure (the J-eliminator)
  \item The internal lattice $\mathbb{L}$ must be compatible with the groupoid structure
\end{enumerate}

In the first paper~\cite{mannucci2025rbtt} we worked in the presheaf topos $\mathbf{Set}^L$ of set-valued functors $L \to \mathbf{Set}$.
For RB--MLTT we upgrade to the groupoid-valued presheaf category $\mathbf{Gpd}^L$ in order to interpret dependent types and intensional identity types.
The inclusion of $\mathbf{Set}^L$ into $\mathbf{Gpd}^L$ as the full subcategory of discrete groupoids is fully faithful, so the RB--TT model from~\cite{mannucci2025rbtt} appears as the $0$-truncated (set-valued) part of the groupoid model developed in this section.

To properly model intensional MLTT, we work with \emph{groupoid-valued presheaves} rather than set-valued presheaves. This captures the fact that identity proofs can be non-trivial and compose.

\subsection{Groupoid-Valued Presheaves}

\begin{definition}[The category $\mathbf{Gpd}$]
$\mathbf{Gpd}$ is the category of small groupoids and functors between them. A groupoid is a category where every morphism is an isomorphism.
\end{definition}

\begin{definition}[Groupoid-valued presheaves over $L$]
We work in the functor category $\mathbf{Gpd}^L$ where:
\begin{itemize}
  \item Objects: functors $F : L \to \mathbf{Gpd}$
  \item Morphisms: natural transformations
\end{itemize}
For each $r \in L$, $F(r)$ is a groupoid (not just a set). For $r_1 \preceq r_2$, we have a functor $F(r_1 \preceq r_2) : F(r_1) \to F(r_2)$.
\end{definition}

\begin{remark}[Why groupoids?]
In intensional MLTT, the identity type $\IdType_A(x, y)$ may have multiple inhabitants (identity proofs), and these proofs have their own identity structure. The groupoid model captures:
\begin{itemize}
  \item Objects of $\llbracket A \rrbracket(r)$: elements of type $A$ with cost $\preceq r$
  \item Morphisms in $\llbracket A \rrbracket(r)$: identity proofs between elements
  \item Composition: transitivity of identity ($p : x = y$ and $q : y = z$ give $q \circ p : x = z$)
  \item Inverses: symmetry ($p : x = y$ gives $p^{-1} : y = x$)
\end{itemize}
This is the standard groupoid model of type theory \cite{hofmann1998groupoid}.
\end{remark}

\subsection{The Internal Lattice in the Groupoid Setting}

\begin{definition}[Internal lattice $\mathbb{L}$]
Define $\mathbb{L} : L \to \mathbf{Gpd}$ by:
\[
\mathbb{L}(r) = \mathsf{disc}(\{a \in L \mid a \preceq r\})
\]
where $\mathsf{disc}(S)$ is the discrete groupoid on set $S$ (only identity morphisms).

Transition functors are inclusions: $\mathbb{L}(r_1 \preceq r_2) : a \mapsto a$.
\end{definition}

The internal lattice is discrete because cost bounds have no non-trivial identities: if $a, b \in L$ are equal as costs, they are \emph{the same} cost. This is a crucial simplification.

\begin{lemma}
The lattice operations induce natural transformations in $\mathbf{Gpd}^L$:
\begin{itemize}
  \item $\widetilde{\oplus} : \mathbb{L} \times \mathbb{L} \Rightarrow \mathbb{L}$
  \item $\widetilde{\sqcup} : \mathbb{L} \times \mathbb{L} \Rightarrow \mathbb{L}$
  \item $\widetilde{\bot} : \mathbf{1} \Rightarrow \mathbb{L}$
\end{itemize}
\end{lemma}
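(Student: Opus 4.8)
The plan is to exhibit each of the three maps as a natural transformation between functors $L \to \mathbf{Gpd}$, exploiting the fact that every component groupoid $\mathbb{L}(r)$ is \emph{discrete}. First I would record the shape of the data: a natural transformation $\eta : F \Rightarrow G$ between functors $F,G : L \to \mathbf{Gpd}$ is a family of functors $\eta_r : F(r) \to G(r)$ making, for each $r_1 \preceq r_2$, the square $G(r_1\preceq r_2)\circ \eta_{r_1} = \eta_{r_2}\circ F(r_1\preceq r_2)$ commute. Since products and the terminal object in $\mathbf{Gpd}^L$ are computed pointwise, we have $(\mathbb{L}\times\mathbb{L})(r) = \mathbb{L}(r)\times\mathbb{L}(r) = \mathsf{disc}(\{(a,b) : a,b\preceq r\})$ and $\mathbf{1}(r) = \mathsf{disc}(\{\ast\})$. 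Because a functor between discrete groupoids is \emph{exactly} a function on object-sets (identities necessarily map to identities), each $\eta_r$ collapses to a set-map and functoriality in $\mathbf{Gpd}$ is automatic. This is precisely the simplification the discreteness of $\mathbb{L}$ buys us: there are no non-identity $2$-cells, hence no coherence conditions to verify.

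Next I would define the three component maps on objects and check that they land in the correct down-set. Set $\widetilde{\oplus}_r(a,b) = a\oplus b$, $\widetilde{\sqcup}_r(a,b) = a\sqcup b$, and $\widetilde{\bot}_r(\ast) = \bot$. For $\widetilde{\sqcup}$ membership is immediate: if $a\preceq r$ and $b\preceq r$ then $r$ is an upper bound of $\{a,b\}$, so the least upper bound satisfies $a\sqcup b \preceq r$. For $\widetilde{\bot}$ it is immediate as well, since $\bot$ is the least element and hence $\bot\preceq r$ for every $r$. The content therefore concentrates entirely in $\widetilde{\oplus}$, where I must verify that $a\preceq r$ and $b\preceq r$ imply $a\oplus b \preceq r$.

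Then I would discharge the naturality squares. For $r_1\preceq r_2$ the transition functor $\mathbb{L}(r_1\preceq r_2)$ is the inclusion $c\mapsto c$, and on the product it is $(a,b)\mapsto(a,b)$. Hence each square reduces to the identity $a\oplus b = a\oplus b$ (and analogously for $\sqcup$ and $\bot$): the operation applied is the ambient lattice operation on $L$ and does not depend on which down-set we regard the result as inhabiting, while the transitions merely re-include. Thus all three squares commute strictly, on the nose, and naturality is automatic.

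The main obstacle is the down-set closure $a\preceq r \wedge b\preceq r \Rightarrow a\oplus b\preceq r$ needed for $\widetilde{\oplus}$ to be well-typed. I would derive it from the standing resource-lattice invariant of~\cite{mannucci2025rbtt}: this is the very property that guarantees, in every multi-premise typing rule (for instance $(\Pi\text{-Elim})$ and $(\Nat\text{-Elim})$), that the synthesized bound $b$ again satisfies $b\preceq r$, as recorded in the Remark on the ambient budget. Concretely, monotonicity of $\oplus$ gives $a\oplus b \preceq r\oplus r$, so it suffices that the down-set of $r$ be closed under $\oplus$, equivalently that $r\oplus r\preceq r$; this is part of the axiomatization of the resource lattice carried over from the simply-typed setting. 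Once this single algebraic fact is in hand, $\widetilde{\oplus}$ is well-defined and the remaining verifications are the routine bookkeeping described above.
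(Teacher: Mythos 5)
Your proposal follows essentially the same route as the paper's proof: discreteness of $\mathbb{L}(r)$ collapses functors between the components to mere functions on object-sets, the three operations are defined pointwise, and the naturality squares commute strictly because the transition functors are inclusions $c \mapsto c$. Where you go beyond the paper is on well-definedness of $\widetilde{\oplus}$: the paper disposes of this with a one-line appeal to ``monotonicity of $\oplus$ and $\sqcup$,'' whereas you correctly observe that monotonicity alone yields only $a \oplus b \preceq r \oplus r$, and you isolate the genuinely needed ingredient, namely down-set closure $r \oplus r \preceq r$ (sub-idempotence of $\oplus$ relative to the order). Pinpointing this is an improvement in rigor over the paper's own argument. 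One caution, however: you assert that $r \oplus r \preceq r$ ``is part of the axiomatization'' of the resource lattice carried over from~\cite{mannucci2025rbtt}, but this is an unverified assumption, and it is in tension with the paper's own intended instances --- in the additive cost lattice implicitly used throughout the case studies (bounds such as $3n+2$ composed by $\oplus = +$ on naturals), sub-idempotence fails, since $r + r \not\preceq r$ for $r > 0$. So your proof is correct conditionally on that closure axiom (it holds, for instance, when $\oplus = \sqcup$), and you should state it as an explicit hypothesis of the lemma rather than fold it into the citation; to be fair, exactly the same unacknowledged gap sits inside the paper's appeal to monotonicity, so your analysis exposes a weakness in the original proof rather than introducing one of your own.
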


\begin{proof}
Since $\mathbb{L}(r)$ is discrete for all $r$, functors between discrete groupoids are just functions on objects. The operations are well-defined by monotonicity of $\oplus$ and $\sqcup$.
\end{proof}

\subsection{Contexts as Groupoid-Valued Presheaves}

\begin{definition}[Context interpretation]
\label{def:context-interp-gpd}
Contexts are interpreted as presheaves $\llbracket \Gamma \rrbracket : L \to \mathbf{Gpd}$ by induction:
\begin{align*}
\llbracket \cdot \rrbracket(r) &= \mathbf{1} \quad \text{(terminal groupoid: one object, one morphism)} \\
\llbracket \Gamma, x : A \rrbracket(r) &= \int_{\gamma \in \llbracket \Gamma \rrbracket(r)} \llbracket A \rrbracket_\gamma(r)
\end{align*}
where the right-hand side is the Grothendieck construction (total groupoid).
\end{definition}

\begin{definition}[Grothendieck construction for groupoids]
For $G$ a groupoid and $F : G \to \mathbf{Gpd}$ a functor, the total groupoid $\int_G F$ has:
\begin{itemize}
  \item Objects: pairs $(\gamma, a)$ with $\gamma \in \mathrm{ob}(G)$ and $a \in \mathrm{ob}(F(\gamma))$
  \item Morphisms $(\gamma_1, a_1) \to (\gamma_2, a_2)$: pairs $(p, q)$ where $p : \gamma_1 \to \gamma_2$ in $G$ and $q : F(p)(a_1) \to a_2$ in $F(\gamma_2)$
\end{itemize}
\end{definition}

\subsection{Dependent Types as Fibrations}

\begin{definition}[Type interpretation]
\label{def:type-interp-gpd}
For $\Gamma \vdash_{r;\,b} A \; \mathsf{type}$, define $\llbracket A \rrbracket : \int \llbracket \Gamma \rrbracket \to \mathbf{Gpd}$ where $\int \llbracket \Gamma \rrbracket$ is the Grothendieck construction over $L$:
\[
\llbracket A \rrbracket(r, \gamma) = \text{groupoid of } \left\{ 
\begin{array}{l}
\text{Objects: } (v, b) \text{ with } \cdot \vdash_{r;\,b} v : A[\gamma], b \preceq r \\
\text{Morphisms } (v_1, b_1) \to (v_2, b_2): \text{ proofs } p : \IdType_{A[\gamma]}(v_1, v_2)
\end{array}
\right\}
\]
\end{definition}

\begin{remark}[Morphisms track identity proofs]
A morphism $(v_1, b_1) \to (v_2, b_2)$ in $\llbracket A \rrbracket(r, \gamma)$ is an identity proof $p : v_1 =_A v_2$. The cost of the proof $p$ is bounded by $\max(b_1, b_2) \oplus \delta_{\mathsf{Id}}$. Composition of morphisms corresponds to transitivity; inverses correspond to symmetry.
\end{remark}

\subsection{The Cost Natural Transformation}

\begin{definition}[Cost as a natural transformation]
For each type $A$ in context $\Gamma$:
\[
\mathsf{cost}_A : \llbracket A \rrbracket \Rightarrow \mathbb{L} \circ \pi
\]
where $\pi : \int \llbracket \Gamma \rrbracket \to L$ is projection. On objects: $\mathsf{cost}_A(r, \gamma)(v, b) = b$. On morphisms: since $\mathbb{L}(r)$ is discrete, all identity proofs map to the identity on the cost.
\end{definition}

\begin{lemma}
$\mathsf{cost}_A$ is a natural transformation of groupoid-valued presheaves.
\end{lemma}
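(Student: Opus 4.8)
The plan is to unfold the definition of a natural transformation between the two groupoid-valued functors $\llbracket A\rrbracket$ and $\mathbb{L}\circ\pi$ on the base $\int\llbracket\Gamma\rrbracket$, and to verify the two requirements separately: that each component $\mathsf{cost}_A(r,\gamma)\colon \llbracket A\rrbracket(r,\gamma)\to\mathbb{L}(r)$ is a functor of groupoids, and that the naturality squares commute for every morphism of $\int\llbracket\Gamma\rrbracket$. I would record at the outset the single structural fact that drives everything: $\mathbb{L}(r)$ is a \emph{discrete} groupoid, so a functor with codomain $\mathbb{L}(r)$ is exactly a function on objects that is constant on each connected component, and every morphism is necessarily sent to an identity. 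This immediately pins down the action of $\mathsf{cost}_A(r,\gamma)$ on morphisms and reduces functoriality to a constancy condition.

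For naturality I would take a morphism $(r_1,\gamma_1)\to(r_2,\gamma_2)$ of $\int\llbracket\Gamma\rrbracket$, which in particular carries $r_1\preceq r_2$. The reindexing functor $\llbracket A\rrbracket$ assigns to it the re-typing map $(v,b)\mapsto(v,b)$: since $b\preceq r_1\preceq r_2$, the annotated value remains a valid object at the larger budget, and the value component (hence its attached bound $b$) is left untouched. On the other side, $\mathbb{L}(r_1\preceq r_2)$ is by definition the inclusion $b\mapsto b$. Both composites around the square therefore act as $(v,b)\mapsto b$, so the square commutes on the nose, with no coherence to check beyond this strict equality. The same observation handles the fibrewise transition functors coming from morphisms $\gamma_1\to\gamma_2$ of $\llbracket\Gamma\rrbracket$, which again preserve the cost label.

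The real content, and the step I expect to be the main obstacle, is functoriality of each component, i.e.\ the constancy condition extracted above: whenever two objects $(v_1,b_1)$ and $(v_2,b_2)$ lie in the same connected component of $\llbracket A\rrbracket(r,\gamma)$ --- equivalently, are linked by an identity proof $p\colon\IdType_{A[\gamma]}(v_1,v_2)$ --- one must have $b_1=b_2$, since otherwise there is no morphism $b_1\to b_2$ in the discrete $\mathbb{L}(r)$ to receive the image of $p$. I would secure this by adopting the convention that the bound component of an object is the \emph{canonical} (syntax-directed, minimal) synthesized bound, so that $b$ is a function of the underlying value and $r$; cost-invariance then amounts to showing that the identity-proof morphisms of the fibre connect only value-pairs carrying equal canonical bounds. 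For the base types that actually occur --- $\Nat$, $\Fin(n)$, $\Vectt(A,n)$ --- this follows from canonicity (\Cref{thm:canonicity}): distinct closed canonical forms are not propositionally identified, so each connected component is a single value together with its $\refl$-loops, on which the bound is manifestly constant. The delicate point is the higher structure at $\Pitype$- and $\IdType$-types, where identity proofs between distinct canonical forms may exist; there I would either restrict morphisms to the cost-preserving identity proofs or, equivalently, factor $\mathsf{cost}_A$ through the connected-component quotient $\pi_0\llbracket A\rrbracket(r,\gamma)$, on which discreteness of $\mathbb{L}$ guarantees that the induced map is well defined and functorial.
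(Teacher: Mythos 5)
You identify exactly the crux the paper's proof turns on: since $\mathbb{L}(r)$ is discrete, any morphism $(v_1,b_1)\to(v_2,b_2)$ in $\llbracket A\rrbracket(r,\gamma)$ forces $b_1=b_2$, so the lemma can only hold if identity-proof morphisms preserve the cost label. The paper resolves this by fiat: it stipulates that the groupoid $\llbracket A\rrbracket(r,\gamma)$ has morphisms only between pairs with the \emph{same} certified bound (mentioning a lax variant, where cost may decrease along identity proofs, as the alternative), and declares the object part immediate; it does not even spell out the naturality squares, which you check explicitly and correctly (every transition functor fixes the bound component, so both composites are $(v,b)\mapsto b$ strictly). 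Your first fallback for the delicate $\Pitype$- and $\IdType$-cases --- restricting to cost-preserving identity proofs --- is literally the paper's fix, so your proposal does arrive at a correct proof. What you add beyond the paper is the attempt to \emph{prove} cost-constancy on connected components via a canonical-minimal-bound convention rather than stipulate it; for the discrete base types this genuinely works (the paper's own interpretation of $\Nat$ and $\Vectt$ as discrete groupoids makes each component a single value with its $\refl$-loops), and it also repairs a wrinkle in the paper's stated definition, under which $\refl_v$ would yield a morphism $(v,b_1)\to(v,b_2)$ for any two admissible bounds on the same value.

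Two caveats. First, your ``equivalently, factor $\mathsf{cost}_A$ through $\pi_0\llbracket A\rrbracket(r,\gamma)$'' is circular: well-definedness of the induced map on connected components is precisely cost-constancy on components, which is the statement at issue; discreteness of $\mathbb{L}$ contributes nothing to that well-definedness. Second, Theorem~\ref{thm:canonicity} asserts that closed terms evaluate to canonical forms; the fact you actually invoke --- that \emph{distinct} canonical forms of $\Nat$, $\Fin(n)$, $\Vectt(A,n)$ are not propositionally identified --- is a no-confusion property, which in this paper comes from the model definitions themselves (the discrete groupoid interpretations), not from canonicity. Neither caveat blocks your proof, since the cost-preserving restriction you name as a fallback is exactly what the paper adopts, but as written those two steps should be excised or corrected.
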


\begin{proof}
We must show functoriality at each stage. For objects, this is immediate. For morphisms: if $p : (v_1, b_1) \to (v_2, b_2)$ is an identity proof, then $\mathsf{cost}_A(p)$ must be a morphism $b_1 \to b_2$ in $\mathbb{L}(r)$. Since $\mathbb{L}(r)$ is discrete, this requires $b_1 = b_2$. 

This is achieved by defining the groupoid $\llbracket A \rrbracket(r, \gamma)$ with morphisms only between pairs with the \emph{same} cost bound, or by working with a lax version where cost can decrease along identity proofs. We adopt the first approach: identity proofs preserve certified bounds.
\end{proof}

\subsection{Interpretation of Identity Types}

The groupoid structure directly models identity types.

\begin{definition}[Identity type interpretation]
\label{def:id-interp-gpd}
For $\Gamma \vdash a, b : A$:
\[
\llbracket \IdType_A(a, b) \rrbracket(r, \gamma) = \mathsf{Hom}_{\llbracket A \rrbracket(r, \gamma)}(\llbracket a \rrbracket_\gamma, \llbracket b \rrbracket_\gamma)
\]
This is a discrete groupoid (set) of morphisms in the groupoid $\llbracket A \rrbracket(r, \gamma)$ from the interpretation of $a$ to the interpretation of $b$.

Objects: identity proofs $p : a[\gamma] =_A b[\gamma]$ with cost bound.
\end{definition}

\begin{lemma}[Reflexivity]
$\refl_a$ is interpreted as the identity morphism $\mathrm{id}_{\llbracket a \rrbracket_\gamma}$ in the groupoid $\llbracket A \rrbracket(r, \gamma)$.
\end{lemma}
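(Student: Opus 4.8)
The plan is to prove the claim by unfolding the compositional term interpretation and matching it against the groupoid structure already fixed for $\llbracket A \rrbracket(r,\gamma)$ in \Cref{def:type-interp-gpd}. First I would recall that the term-interpretation clauses send the term $a$ to an object $\llbracket a \rrbracket_\gamma = (v_a, b_a)$ of the groupoid $\llbracket A \rrbracket(r,\gamma)$, where $v_a$ is the canonical form of $a[\gamma]$ and $b_a \preceq r$ is its certified bound. By \Cref{def:id-interp-gpd} the type $\IdType_A(a,a)$ is interpreted as the hom-groupoid $\mathsf{Hom}_{\llbracket A \rrbracket(r,\gamma)}(\llbracket a \rrbracket_\gamma, \llbracket a \rrbracket_\gamma)$, so an inhabitant of this type must be interpreted as an endomorphism of the object $(v_a,b_a)$. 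The goal is thus to verify that the interpretation carries the constructor $\refl_a$ to $\mathrm{id}_{(v_a,b_a)}$.

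The core of the argument is to read the identity morphism off the definition of $\llbracket A \rrbracket(r,\gamma)$. By construction a morphism $(v_1,b_1)\to(v_2,b_2)$ in this groupoid is an identity proof $p:\IdType_{A[\gamma]}(v_1,v_2)$, and the identity endomorphism on $(v,b)$ is exactly the reflexivity proof $\refl_v$. Since the term interpretation sends the syntactic $\refl_a$ to the reflexivity proof on the interpreted value, we obtain $\llbracket \refl_a \rrbracket_\gamma = \refl_{v_a} = \mathrm{id}_{(v_a,b_a)}$ definitionally. I would then discharge the two coherence checks that make this identification a genuine groupoid unit: that composition of identity proofs (transitivity) has $\refl_{v_a}$ as a two-sided unit, and that inversion (symmetry) fixes it. Both reduce to the standard groupoid laws of the identity-proof structure, which is precisely the structure the groupoid model is chosen to supply.

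Next I would verify naturality, namely that the assignment $\refl_a \mapsto \mathrm{id}$ is stable under the two forms of reindexing present in the model. Along a presheaf transition $r_1 \preceq r_2$ the transition functor acts on morphisms by weakening the certified bound and carries $\refl_{v_a}$ to $\refl_{v_a}$, so identity morphisms are preserved; along a context morphism (substitution) $\sigma$ one uses the syntactic equation $\refl_a[\sigma] = \refl_{a[\sigma]}$ together with functoriality of reindexing to conclude that the interpretation commutes with substitution, which is exactly the naturality square demanded of a term in the fibred model.

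Finally, the step I expect to be the main obstacle is reconciling the cost bookkeeping. As a term, $\refl_a$ carries the synthesized bound $b_a \oplus \delta_{\refl}$ from rule $(\mathsf{Id}\text{-Intro})$, whereas as a morphism of $\llbracket A \rrbracket(r,\gamma)$ the identity must be compatible with the cost natural transformation $\mathsf{cost}_A$; since $\mathbb{L}(r)$ is discrete, $\mathsf{cost}_A$ can only relate objects of \emph{equal} bound. The identity morphism satisfies this trivially, its source and target both being $(v_a,b_a)$, so the delicate point is merely to check that the overhead $\delta_{\refl}$ attached to the identity-type inhabitant is absorbed into (or identified with) the constant $\delta_{\mathsf{Id}}$ governing hom-groupoid costs, so that the recorded morphism bound $\max(b_a,b_a)\oplus\delta_{\mathsf{Id}} = b_a \oplus \delta_{\mathsf{Id}}$ agrees with the term-level bound. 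Once these constants are aligned, the identification $\llbracket \refl_a \rrbracket_\gamma = \mathrm{id}_{\llbracket a \rrbracket_\gamma}$ holds on the nose.
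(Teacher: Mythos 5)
Your proposal is correct and takes the same route the paper (implicitly) does: the paper states this lemma without proof precisely because, once \Cref{def:type-interp-gpd} and \Cref{def:id-interp-gpd} are unfolded, the identity morphism on $(v_a,b_a)$ \emph{is} the reflexivity proof $\refl_{v_a}$, so the identification $\llbracket \refl_a \rrbracket_\gamma = \mathrm{id}_{\llbracket a \rrbracket_\gamma}$ is definitional. Your additional checks (naturality under $r_1 \preceq r_2$ and substitution, and the alignment of the term-level overhead $\delta_{\refl}$ with the morphism-level constant $\delta_{\mathsf{Id}}$) go beyond what the paper records --- the $\delta_{\refl}$ versus $\delta_{\mathsf{Id}}$ discrepancy you flag is a genuine loose end in the paper's cost bookkeeping, but it does not affect the lemma as stated, which concerns only the underlying morphism and not its certified bound.
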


\begin{theorem}[J-eliminator is sound]
The J-eliminator is modeled by the universal property of identity morphisms in groupoids: given a family $C$ over the path space and $d : C(a, \refl_a)$, we obtain $\J(p, d) : C(b, p)$ for any $p : a = b$ by transport along $p$.
\end{theorem}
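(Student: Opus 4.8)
The plan is to exploit the fact that, in a groupoid, the based path space is contractible, and to realize $\J$ as transport along the canonical contraction. Fix a stage $r$ and an environment $\gamma$, and write $\mathcal{A} := \llbracket A \rrbracket(r,\gamma)$ and $x_0 := \llbracket x \rrbracket_\gamma$. The motive $C$, living over $z : A,\ w : \IdType_A(x,z)$, is interpreted as a functor on the coslice groupoid $x_0 \backslash \mathcal{A}$, whose objects are pairs $(z,w)$ with $w : x_0 \to z$ a morphism of $\mathcal{A}$ (i.e.\ an identity proof) and whose morphisms $(z_1,w_1)\to(z_2,w_2)$ are maps $f : z_1 \to z_2$ with $f\circ w_1 = w_2$. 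The first step is to show this coslice is \emph{contractible}, with center $(x_0,\mathrm{id})$: for any $(z,w)$ there is a unique morphism $(x_0,\mathrm{id})\to(z,w)$, namely $w$ itself, since $f\circ\mathrm{id}=w$ forces $f=w$. This is precisely the groupoid content of ``singletons are contractible''.

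Given contractibility, the second step defines the eliminator. The method $d$ interprets as an object $\llbracket d\rrbracket \in \llbracket C\rrbracket(x_0,\mathrm{id})$ over the center. A path $p : \IdType_A(x,y)$ interprets as a morphism $p_0 : x_0 \to \llbracket y\rrbracket_\gamma$ in $\mathcal{A}$, equivalently as the unique coslice morphism $m_p : (x_0,\mathrm{id}) \to (\llbracket y\rrbracket_\gamma,p_0)$. Functoriality of $\llbracket C\rrbracket$ sends $m_p$ to an isomorphism of fibers $\llbracket C\rrbracket(m_p) : \llbracket C\rrbracket(x_0,\mathrm{id}) \xrightarrow{\ \sim\ } \llbracket C\rrbracket(\llbracket y\rrbracket_\gamma,p_0)$, and I set $\llbracket \J(p,d)\rrbracket := \llbracket C\rrbracket(m_p)(\llbracket d\rrbracket)$, which lands in $\llbracket C[z:=y,w:=p]\rrbracket$, matching the conclusion of rule $(\mathsf{J})$. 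The computation rule is immediate: when $p = \refl_x$ we have $p_0 = \mathrm{id}$, so $m_p$ is the identity of the center, $\llbracket C\rrbracket(m_p) = \mathrm{id}$, and $\llbracket \J(\refl_x,d)\rrbracket = \llbracket d\rrbracket$, validating $(\mathsf{J}\text{-}\beta)$.

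The third step is cost-tracking. The coslice morphism $m_p$ lies over $\mathrm{id}_r \in L$, so $\mathbb{L}(\pi(m_p)) = \mathrm{id}$; naturality of the cost transformation $\mathsf{cost}_C : \llbracket C\rrbracket \Rightarrow \mathbb{L}\circ\pi$ (established above, using discreteness of $\mathbb{L}$) then forces $\mathsf{cost}_C\circ\llbracket C\rrbracket(m_p) = \mathsf{cost}_C$, i.e.\ transport preserves certified bounds. Hence $\llbracket\J(p,d)\rrbracket$ carries the same bound $b_d$ as $\llbracket d\rrbracket$. Since $b_d \preceq b_p\oplus b_d\oplus\delta_\J$ by monotonicity of $\oplus$, the synthesized bound of rule $(\mathsf{J})$ over-approximates this semantic cost, in agreement with the operational rule $(\J\text{-}\refl)$, whose cost $k_p\oplus k_d\oplus\delta_\J$ is itself bounded by $b_p\oplus b_d\oplus\delta_\J$ via Cost Soundness (\Cref{thm:cost-soundness}).

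Finally, the construction must be natural in the lattice index and stable under substitution: for $r_1\preceq r_2$ the transition functors of $\llbracket A\rrbracket$ must carry the coslice, its center, and the contraction of stage $r_1$ to those of stage $r_2$, and substitution into $\gamma, x, y, C$ must commute with forming the coslice and transporting. I expect this coherence to be the main obstacle. Because groupoids are $1$-truncated, no higher coherence data is required, so it reduces to a finite diagram chase: one checks that the assignment $p\mapsto m_p$ is natural in $r$ and preserved by substitution, after which functoriality of $\llbracket C\rrbracket$ and naturality of the identity-type interpretation (\Cref{def:id-interp-gpd}) yield the required commuting squares, establishing that $\llbracket\J\rrbracket$ is a well-defined, substitution-stable natural transformation.
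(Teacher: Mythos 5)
Your proposal is correct and rests on the same underlying mechanism as the paper's proof---transport via functoriality of the interpreted motive, with the computation rule falling out of preservation of identities---but your decomposition is genuinely more careful, and the difference matters. The paper's proof is three lines: it treats the motive as a functor $F$ on the groupoid interpreting $A$ itself, sets $\J(p,d) = F(p)(d)$, and observes $F(\mathrm{id}_x) = \mathrm{id}_{F(x)}$. Strictly speaking, that only typechecks when $C$ is independent of the path variable $w$: in rule $(\mathsf{J})$ the motive lives in context $z : A,\ w : \IdType_A(x,z)$, so its interpretation must be a functor over the based path space, which is exactly your coslice groupoid $x_0\backslash\mathcal{A}$. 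Your contractibility lemma (the unique morphism from the center $(x_0,\mathrm{id})$ to any $(z,w)$ is $w$ itself) is the Hofmann--Streicher singleton-contractibility argument, and it is what makes transport along $m_p$ well defined for the full dependent motive; so your route repairs a genuine imprecision in the paper's sketch rather than merely elaborating it. You also carry out two checks the paper's proof omits entirely: cost preservation under transport---which does go through, via discreteness of $\mathbb{L}$ and the paper's convention that identity proofs relate only pairs with the same certified bound, with $b_d \preceq b_p \oplus b_d \oplus \delta_\J$ holding because $\bot$ is the unit of $\oplus$ and $\oplus$ is monotone---and naturality in the stage $r$ together with stability under substitution, which you rightly identify as the residual coherence obligation; in a $1$-truncated model this is a finite diagram chase, as you say. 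The trade-off: the paper's version is shorter and reads as a direct instance of functoriality, while yours is the version that would survive formalization.
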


\begin{proof}
In a groupoid $G$, for any functor $F : G \to \mathbf{Gpd}$ and any morphism $p : x \to y$ in $G$, there is an induced functor $F(p) : F(x) \to F(y)$. The J-eliminator corresponds to:
\[
\J(p, d) = F(p)(d)
\]
where $F$ interprets the motive $C$. The computation rule $\J(\refl_a, d) = d$ holds because $F(\mathrm{id}_x) = \mathrm{id}_{F(x)}$.
\end{proof}

\subsection{Interpretation of $\Pi$-Types}

\begin{definition}[$\Pi$-type interpretation in groupoids]
\label{def:pi-interp-gpd}
\begin{align*}
&\mathrm{ob}(\llbracket \Pitype_{x:A}^{b(x)} B \rrbracket(r, \gamma)) = \\
&\quad \left\{ (\lambda x.t, b_\lambda) \,\middle|\, 
\begin{array}{l}
\cdot \vdash_{r;\,b_\lambda} \lambda x.t : (\Pitype_{x:A}^{b(x)} B)[\gamma] \\
\forall (v, b_v) \in \mathrm{ob}(\llbracket A \rrbracket(r, \gamma)).\\
\quad \exists (w, b_w) \in \mathrm{ob}(\llbracket B \rrbracket(r, (\gamma, v))).\\
\quad t[\gamma][x := v] \eval w \triangleright k, k \preceq b(v)
\end{array}
\right\}
\end{align*}

Morphisms $(\lambda x.t_1, b_1) \to (\lambda x.t_2, b_2)$: proofs that $\lambda x.t_1 =_{\Pi} \lambda x.t_2$, i.e., pointwise identity proofs $\forall x.\, t_1[x] = t_2[x]$.
\end{definition}

\begin{remark}[Function extensionality]
The morphisms in $\llbracket \Pi_{x:A}^{b(x)} B \rrbracket$ correspond to function extensionality: two functions are equal iff they are pointwise equal. In intensional MLTT, function extensionality is not derivable but holds in the groupoid model.
\end{remark}

\subsection{Interpretation of $\Sigma$-Types}

\begin{definition}[$\Sigma$-type interpretation]
\label{def:sigma-interp-gpd}
\begin{align*}
\mathrm{ob}(\llbracket \Sigma_{x:A} B \rrbracket(r, \gamma)) &= \left\{ ((v_1, v_2), b_1 \oplus b_2) \,\middle|\, 
\begin{array}{l}
(v_1, b_1) \in \mathrm{ob}(\llbracket A \rrbracket(r, \gamma)) \\
(v_2, b_2) \in \mathrm{ob}(\llbracket B \rrbracket(r, (\gamma, v_1)))
\end{array}
\right\}
\end{align*}

Morphisms $((v_1, v_2), b) \to ((w_1, w_2), b')$: pairs $(p_1, p_2)$ where $p_1 : v_1 = w_1$ and $p_2 : v_2 =_{B[p_1]} w_2$ (dependent path).
\end{definition}

\subsection{Interpretation of Inductive Types}

\paragraph{Natural numbers.}
\[
\llbracket \Nat \rrbracket(r) = \mathsf{disc}(\{ (\smark^n(\zmark), \bot) \mid n \in \mathbb{N} \})
\]
The natural numbers form a discrete groupoid: two numerals are equal iff they are identical.

\paragraph{Vectors.}
\begin{align*}
\mathrm{ob}(\llbracket \Vectt(A, n) \rrbracket(r, \gamma)) &= \left\{ (v, b) \,\middle|\, 
\begin{array}{l}
v = \mathsf{cons}(v_1, \ldots, \mathsf{nil}) \\
\text{length} = \llbracket n \rrbracket_\gamma \\
\text{each } v_i \in \mathrm{ob}(\llbracket A \rrbracket)
\end{array}
\right\}
\end{align*}

Morphisms: componentwise identity proofs $(p_1, \ldots, p_n)$ with $p_i : v_i = w_i$ in $\llbracket A \rrbracket$.

\subsection{Interpretation of Universes}

\begin{definition}[Universe interpretation]
\label{def:universe-interp-gpd}
\[
\mathrm{ob}(\llbracket \U_s \rrbracket(r)) = \{ (A, b) \mid A \text{ is a type code}, \cdot \vdash_{r;\,b} A : \U_s, b \preceq s \preceq r \}
\]
Morphisms $(A_1, b_1) \to (A_2, b_2)$: type equivalences $A_1 \simeq A_2$ that preserve cost bounds.
\end{definition}

\begin{remark}[Not univalent]
In the groupoid model, morphisms in the universe are type isomorphisms (equivalences), but the univalence axiom ($\mathsf{ua} : (A \simeq B) \to (A = B)$) is not validated. We work with intensional MLTT where $A = B$ in $\U$ requires definitional equality, not just equivalence. The groupoid structure captures the homotopy 1-type structure but not higher univalence.
\end{remark}

\subsection{Semantic Soundness}

\begin{theorem}[Semantic soundness]
\label{thm:semantic-soundness}
If $\Gamma \vdash_{r;\,b} t : A$ and $\gamma \in \mathrm{ob}(\llbracket \Gamma \rrbracket(r))$, then:
\begin{enumerate}
  \item $t[\gamma] \eval v \triangleright k$ for some $v \in \Val$, $k \in L$
  \item $k \preceq b[\gamma]$
  \item $(v, b[\gamma]) \in \mathrm{ob}(\llbracket A \rrbracket(r, \gamma))$
  \item $\mathsf{cost}_A(r, \gamma)(v, b[\gamma]) = b[\gamma]$
\end{enumerate}
\end{theorem}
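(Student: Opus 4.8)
The plan is to prove all four conclusions simultaneously by induction on the typing derivation $\Gamma \vdash_{r;\,b} t : A$, since the four assertions are mutually dependent: the membership claim (3) refers to the value $v$ produced by (1), and the cost equation (4) is essentially the definition of $\mathsf{cost}_A$ applied to the witness from (3). Note first that conclusions (1) and (2) are exactly the content of Cost Soundness (\Cref{thm:cost-soundness}), specialized to the closed instance $t[\gamma]$: once a closing environment $\gamma$ is substituted, $t[\gamma]$ is a closed term of type $A[\gamma]$ with bound $b[\gamma]$, so \Cref{thm:cost-soundness} gives the evaluation $t[\gamma] \eval v \triangleright k$ with $k \preceq b[\gamma]$. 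Thus the genuinely new work is conclusions (3) and (4), which connect the operational witness back to the semantic groupoid.

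First I would set up the induction and dispatch conclusions (1)--(2) by appeal to \Cref{thm:cost-soundness} (for the open case, I would need to observe that a well-formed environment $\gamma \in \mathrm{ob}(\llbracket \Gamma \rrbracket(r))$ supplies closed values of the correct types with bounds $\preceq r$, so substitution reduces the open judgement to a closed one; this is where \Cref{lem:substitution} is invoked). Conclusion (4) is then immediate once (3) holds: by \Cref{def:type-interp-gpd} the object $(v, b[\gamma])$ records its own cost component, and $\mathsf{cost}_A(r,\gamma)$ is defined to be the second projection, so $\mathsf{cost}_A(r,\gamma)(v, b[\gamma]) = b[\gamma]$ by definition. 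The core of the proof is therefore conclusion (3): showing $(v, b[\gamma]) \in \mathrm{ob}(\llbracket A \rrbracket(r,\gamma))$, i.e.\ that the computed value, paired with the synthesized bound, is a bona fide object of the interpreting groupoid.

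For conclusion (3) I would proceed by cases on the last typing rule, checking in each case that the value delivered by the evaluation relation matches the shape prescribed by the corresponding clause of the interpretation. For the structural formers this is routine once Preservation (\Cref{thm:preservation}) is in hand: Preservation gives $\cdot \vdash_{r;\,b'} v : A[\gamma]$ with $b' \preceq b[\gamma]$, which is exactly the membership condition in \Cref{def:type-interp-gpd} (with the bound slot filled by $b[\gamma] \succeq b' \succeq k$, consistent with $k \preceq b[\gamma]$). For the $\Pi$-case I would verify the universally-quantified clause of \Cref{def:pi-interp-gpd}: given any $(v_a, b_a) \in \mathrm{ob}(\llbracket A \rrbracket(r,\gamma))$, the substituted body $t[\gamma][x := v_a]$ evaluates with cost $\preceq b(v_a)$, which follows from the inner IH together with the application clause of \Cref{thm:cost-soundness}. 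For $\Sigma$ and the inductive families I would match the componentwise structure against the respective interpretation clauses, using the IH on subterms and the additive cost composition.

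The main obstacle will be the compatibility between the value-indexed objects of $\llbracket A \rrbracket$ and the bound carried alongside them, specifically reconciling the bound $b[\gamma]$ appearing in conclusions (3)--(4) with the possibly strictly smaller bound $b'$ that Preservation assigns to the value $v$. The interpretation in \Cref{def:type-interp-gpd} admits objects $(v, b)$ for any $b$ with $\cdot \vdash_{r;\,b} v : A[\gamma]$ and $b \preceq r$; I expect to argue that $(v, b[\gamma])$ qualifies because $b[\gamma]$ is itself a valid (if non-tight) bound for $v$ — monotonicity, via the $\Box$-Mono-style weakening implicit in the bound order $\preceq$, lets us inflate the tight bound $b'$ up to $b[\gamma]$. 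A subtler point, flagged already in the cost-natural-transformation lemma, is that morphisms in $\llbracket A \rrbracket(r,\gamma)$ only connect objects with equal cost bounds; I would confine the argument to object-level membership and cost agreement, so that the discreteness of $\mathbb{L}(r)$ and the ``identity proofs preserve certified bounds'' convention do not interfere with conclusions (3)--(4), which are purely about objects.
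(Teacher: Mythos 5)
Your proposal takes essentially the same approach as the paper's proof: induction on the typing derivation following the pattern of \Cref{thm:cost-soundness} for conclusions (1)--(2), case-by-case verification against the interpretation clauses (notably \Cref{def:pi-interp-gpd}) for conclusion (3), and conclusion (4) holding by definition of the cost natural transformation. If anything you are more careful than the paper's sketch, which silently passes over the bound-inflation step you correctly flag as the main obstacle --- reconciling the tight bound $b' \preceq b[\gamma]$ supplied by \Cref{thm:preservation} with object membership at exactly $b[\gamma]$, which needs an admissible bound-weakening principle that the paper's rules never state explicitly.
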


\begin{proof}
By induction on the typing derivation, as in Theorem~\ref{thm:cost-soundness}, but now tracking the groupoid structure. The key cases:

\paragraph{Case (J-eliminator).} Given $p : \IdType_A(a, b)$ interpreted as a morphism $\llbracket a \rrbracket \to \llbracket b \rrbracket$ in $\llbracket A \rrbracket(r, \gamma)$, and $d : C[a, \refl_a]$, we have $\J(p, d) : C[b, p]$ by transport along $p$. The cost is $b_p \oplus b_d \oplus \delta_\J$.

\paragraph{Case ($\Pi$-Elim).} As before, but now morphisms (identity proofs) between functions are preserved by application.
\end{proof}

\subsection{Resource-Bounded CwFs over Groupoids}

\begin{definition}[RB-CwF over groupoids]
\label{def:rb-cwf-gpd}
A \emph{resource-bounded category with families over groupoids} (RB-CwF-Gpd) over lattice $L$ consists of:
\begin{enumerate}
  \item A category $\mathcal{C}$ with a terminal object, where each hom-set $\mathcal{C}(\Gamma, \Delta)$ carries a groupoid structure (making $\mathcal{C}$ a $(2,1)$-category)
  \item A pseudofunctor $\mathsf{Ty} : \mathcal{C}^{op} \to \mathbf{Gpd}$ of types
  \item For each $\Gamma \in \mathcal{C}$ and $A \in \mathsf{Ty}(\Gamma)$, a groupoid $\mathsf{Tm}(\Gamma, A)$ of terms
  \item Comprehension with the universal property up to isomorphism
  \item An internal discrete lattice object $\mathbb{L}_\mathcal{C}$
  \item Cost functors $\mathsf{cost}_A : \mathsf{Tm}(\Gamma, A) \to \mathsf{disc}(\mathbb{L}_\mathcal{C})$
  \item Type formers ($\Pi$, $\Sigma$, $\mathsf{Id}$, $\Nat$, $\Vectt$, etc.) respecting the groupoid structure and costs
\end{enumerate}
\end{definition}
\begin{remark}[Relation to ordinary CwFs]
Our notion of RB-CwF-Gpd is a resource-graded refinement of the standard categories with families (CwFs) used to model Martin--L\"of type theory, such as those of Hofmann~\cite{hofmann1997syntax}.
When the internal lattice object is taken to be terminal and all cost functors $\mathsf{cost}_A$ are constant at the least element of $L$, an RB-CwF-Gpd reduces to an ordinary CwF over groupoids.
In this sense RB-CwF-Gpd is not a new semantic framework but an enrichment of the usual CwF structure with cost data.
\end{remark}

\begin{example}[Syntactic RB-CwF-Gpd]
The syntax of RB-MLTT forms an RB-CwF-Gpd where:
\begin{itemize}
  \item $\mathcal{C}(\Gamma, \Delta)$ = substitutions up to definitional equality
  \item Groupoid structure on $\mathsf{Tm}(\Gamma, A)$: terms identified up to definitional equality, with identity proofs as morphisms
  \item $\mathbb{L}_\mathcal{C}$ = discrete groupoid on $L$
\end{itemize}
\end{example}

\begin{example}[Groupoid presheaf RB-CwF-Gpd]
The groupoid presheaf model $\mathbf{Gpd}^L$ forms an RB-CwF-Gpd where:
\begin{itemize}
  \item $\mathcal{C}$ = groupoid-valued presheaves over $L$
  \item $\mathsf{Ty}(\llbracket \Gamma \rrbracket)$ = groupoid-valued presheaves over $\int \llbracket \Gamma \rrbracket$
  \item $\mathsf{Tm}$ = sections of the fibration
  \item $\mathbb{L}_\mathcal{C} = \mathbb{L}$ (discrete)
\end{itemize}
\end{example}

\subsection{Initiality}

\begin{theorem}[Initiality]
\label{thm:initiality}
The syntactic RB-CwF-Gpd is initial: for any RB-CwF-Gpd $\mathcal{M}$, there exists a unique (up to isomorphism) morphism of RB-CwF-Gpds $\llbracket - \rrbracket_\mathcal{M} : \mathbf{Syn} \to \mathcal{M}$.
\end{theorem}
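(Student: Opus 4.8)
The plan is to prove initiality by the standard "term/type model is initial" argument, adapted to the resource-graded groupoid setting. The overall strategy is: define a candidate interpretation functor $\llbracket - \rrbracket_\mathcal{M}$ by recursion on the syntax of RB-MLTT, show it is a morphism of RB-CwF-Gpds (i.e.\ it preserves the terminal object, the pseudofunctor $\mathsf{Ty}$, the term groupoids $\mathsf{Tm}$, comprehension, the internal lattice $\mathbb{L}_\mathcal{C}$, the cost functors, and all type formers), and finally show that any morphism with these preservation properties must agree with $\llbracket - \rrbracket_\mathcal{M}$, giving uniqueness up to isomorphism.

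First I would make the claim precise by spelling out what a \emph{morphism of RB-CwF-Gpds} is: a (pseudo)functor $F : \mathcal{C}_{\mathbf{Syn}} \to \mathcal{C}_\mathcal{M}$ preserving the terminal object up to isomorphism, together with coherent natural isomorphisms relating $\mathsf{Ty}_{\mathbf{Syn}}$ and $\mathsf{Ty}_\mathcal{M} \circ F$, and $\mathsf{Tm}_{\mathbf{Syn}}$ with $\mathsf{Tm}_\mathcal{M} \circ F$, all commuting with comprehension, with the internal lattice structure, and with the cost functors $\mathsf{cost}_A$. Then I would define $\llbracket - \rrbracket_\mathcal{M}$ simultaneously on contexts, types, terms, and substitutions by structural recursion: the empty context goes to the terminal object; context extension $\Gamma, x:A$ goes to the comprehension $\llbracket \Gamma \rrbracket_\mathcal{M}.\llbracket A \rrbracket_\mathcal{M}$ in $\mathcal{M}$; each type/term former is sent to the corresponding structure in $\mathcal{M}$, using the RB-CwF-Gpd operations provided by Definition~\ref{def:rb-cwf-gpd}; and the cost annotation $b$ on each judgement is sent to the image of $b$ under the lattice morphism $L \to \mathbb{L}_\mathcal{M}$, with $\oplus$, $\sqcup$, $\bot$ preserved on the nose because $\mathbb{L}_\mathcal{M}$ is discrete.

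The crux of the argument, and the step I expect to be the main obstacle, is \emph{well-definedness of the interpretation up to definitional equality together with coherence of the pseudofunctorial structure}. Because $\mathcal{C}(\Gamma,\Delta)$ consists of substitutions up to definitional equality and $\mathsf{Ty}$ is only a \emph{pseudofunctor} (comprehension satisfies its universal property up to isomorphism, not strictly), I must verify that definitionally equal terms receive equal (or canonically isomorphic) interpretations. This requires an induction on the definitional-equality judgement $\ctx \vdash_{r;\,b} A \equiv B$ and on term equality, checking that every computation rule ($\Pi\text{-}\beta$, $\Nat\text{-}\beta$, $\vecrec$-equations, $\J\text{-}\beta$, projections) is validated by the corresponding equation in $\mathcal{M}$, and that the cost side-conditions match under $\oplus$. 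The coherence conditions witnessing that substitution is preserved up to the comprehension isomorphisms are the familiar bookkeeping that makes initiality theorems technically heavy; here they are complicated further by the need to track that cost functors are preserved, i.e.\ that $\mathsf{cost}_{\llbracket A \rrbracket_\mathcal{M}}$ agrees with the image of $\mathsf{cost}_A$. I would handle this by observing that, since $\mathbb{L}$ and $\mathbb{L}_\mathcal{M}$ are both discrete, the cost data contributes no higher morphisms and the coherence cells collapse, so the cost-preservation squares commute strictly once the underlying CwF coherence is established.

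Finally, for uniqueness I would argue that any morphism $G : \mathbf{Syn} \to \mathcal{M}$ of RB-CwF-Gpds must send each syntactic generator to the structure forced by the preservation conditions: $G(\cdot)$ is terminal, $G$ of a context extension is the comprehension, $G$ of each type/term former is determined by the requirement that $G$ preserve that former, and $G$ on costs is determined by preservation of $\mathbb{L}$. A straightforward induction over the grammar then shows $G \cong \llbracket - \rrbracket_\mathcal{M}$ via a canonical natural isomorphism, built from the preservation isomorphisms, establishing uniqueness up to isomorphism. Since the syntactic model was already shown to be an RB-CwF-Gpd in the preceding example, and since all type formers of RB-MLTT are \emph{freely} generated by their introduction and elimination rules, the recursion defining $\llbracket - \rrbracket_\mathcal{M}$ meets no additional obstructions beyond the coherence checks above, completing the proof.
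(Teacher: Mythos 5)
Your proposal follows essentially the same route as the paper's own proof: define $\llbracket - \rrbracket_\mathcal{M}$ by structural recursion, sending the empty context to the terminal object, context extension to comprehension, each type/term former to the corresponding structure in $\mathcal{M}$, and cost annotations through the internal lattice, then derive uniqueness up to isomorphism from the preservation requirements. If anything, your treatment is more careful than the paper's sketch, since you explicitly spell out the notion of morphism of RB-CwF-Gpds and identify the well-definedness/coherence check over definitional equality (and its collapse for the discrete cost data) that the paper's proof passes over in silence.
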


\begin{proof}
By induction on syntax, now respecting the groupoid structure:

\paragraph{Contexts.} $\llbracket \Gamma \rrbracket_\mathcal{M}$ defined by comprehension.

\paragraph{Types.} $\llbracket A \rrbracket_\mathcal{M}$ defined using type formers in $\mathcal{M}$.

\paragraph{Terms.} $\llbracket t \rrbracket_\mathcal{M}$ defined using term formers. For identity proofs, use the groupoid structure.

\paragraph{Identity proofs.} For $p : a = b$ in context $\Gamma$, $\llbracket p \rrbracket_\mathcal{M}$ is a morphism $\llbracket a \rrbracket_\mathcal{M} \to \llbracket b \rrbracket_\mathcal{M}$ in the groupoid $\mathsf{Tm}(\llbracket \Gamma \rrbracket_\mathcal{M}, \llbracket A \rrbracket_\mathcal{M})$.

\paragraph{Cost preservation.} The cost functor commutes with interpretation:
\[
\mathsf{cost}_{\llbracket A \rrbracket_\mathcal{M}}(\llbracket t \rrbracket_\mathcal{M}) = \llbracket b \rrbracket_\mathcal{M}
\]
where $\Gamma \vdash_{r;\,b} t : A$.

\paragraph{Uniqueness.} Up to natural isomorphism, forced by the universal properties of type formers and the groupoid structure.
\end{proof}

\begin{corollary}[Sound embedding]
For any closed term $\cdot \vdash_{r;\,b} t : A$ and any RB-CwF-Gpd $\mathcal{M}$:
\begin{enumerate}
  \item $\llbracket t \rrbracket_\mathcal{M} \in \mathrm{ob}(\mathsf{Tm}(\mathbf{1}, \llbracket A \rrbracket_\mathcal{M}))$
  \item $\mathsf{cost}(\llbracket t \rrbracket_\mathcal{M}) \preceq b$
  \item Identity proofs are preserved: $\llbracket \refl_a \rrbracket_\mathcal{M} = \mathrm{id}_{\llbracket a \rrbracket_\mathcal{M}}$
\end{enumerate}
\end{corollary}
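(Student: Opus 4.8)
The plan is to obtain all three clauses by specializing the initiality morphism $\llbracket - \rrbracket_\mathcal{M} : \mathbf{Syn} \to \mathcal{M}$ of \Cref{thm:initiality} to the empty context. A closed judgement $\cdot \vdash_{r;\,b} t : A$ is, in the syntactic RB-CwF-Gpd of the preceding example, exactly an object of the term groupoid $\mathsf{Tm}(\mathbf{1}_{\mathbf{Syn}}, A)$, where $\mathbf{1}_{\mathbf{Syn}} = \llbracket \cdot \rrbracket$ is the terminal object interpreting the empty context. Since a morphism of RB-CwF-Gpds preserves the terminal object up to isomorphism and acts functorially on each term groupoid, $\llbracket - \rrbracket_\mathcal{M}$ sends this object, after transport along the canonical comparison iso $\llbracket \mathbf{1}_{\mathbf{Syn}} \rrbracket_\mathcal{M} \cong \mathbf{1}_\mathcal{M}$, to an element $\llbracket t \rrbracket_\mathcal{M} \in \mathrm{ob}(\mathsf{Tm}(\mathbf{1}_\mathcal{M}, \llbracket A \rrbracket_\mathcal{M}))$. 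This is clause (1).

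For clause (2) I would invoke the cost-preservation property already verified inside the proof of \Cref{thm:initiality}: the interpretation commutes with the cost functors, so that $\mathsf{cost}(\llbracket t \rrbracket_\mathcal{M}) = \llbracket b \rrbracket_\mathcal{M}$, the image of $b$ under the induced monotone homomorphism $L \to \mathbb{L}_\mathcal{M}$ of internal discrete lattices. Because this homomorphism preserves $\preceq$, the stated bound $\mathsf{cost}(\llbracket t \rrbracket_\mathcal{M}) \preceq b$ follows — on the nose when we identify $L$ with its image in $\mathbb{L}_\mathcal{M}$, and as an inequality otherwise. Clause (3) is the behaviour of $\llbracket - \rrbracket_\mathcal{M}$ on the \emph{morphisms} of the term groupoid: in $\mathbf{Syn}$ the proof $\refl_a$ is the identity morphism $\mathrm{id}_a$ of $\mathsf{Tm}(\mathbf{1}_{\mathbf{Syn}}, A)$ by the Reflexivity lemma, and a morphism of RB-CwF-Gpds is in particular functorial on each term groupoid, hence preserves identities; thus $\llbracket \refl_a \rrbracket_\mathcal{M} = \mathrm{id}_{\llbracket a \rrbracket_\mathcal{M}}$.

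The only genuine subtlety, and the step I would treat most carefully, is the ``up to isomorphism'' slack in the preservation of the terminal object and of comprehension: since $\llbracket - \rrbracket_\mathcal{M}$ preserves structure only up to coherent natural isomorphism, clauses (1) and (3) are asserted after transport along these comparison isomorphisms. I would make explicit that these isomorphisms act as identities on cost — because $\mathbb{L}_\mathcal{M}$ is discrete, every comparison morphism lies over an identity of the lattice — so that the equality underlying clause (2) is unaffected by the transport and the corollary is coherent as a whole.
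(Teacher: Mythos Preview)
Your proposal is correct and matches the paper's intended approach: the corollary is stated without proof in the paper, being an immediate consequence of \Cref{thm:initiality}, and you have spelled out exactly that derivation---specializing the initial morphism to the empty context, invoking the cost-preservation clause of the initiality proof, and using functoriality on term groupoids for identity preservation. Your additional care about the ``up to isomorphism'' coherence and the discreteness of $\mathbb{L}_\mathcal{M}$ goes beyond what the paper makes explicit, but is entirely appropriate.
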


\begin{remark}[Relation to higher models]
The groupoid model captures the 1-truncated structure of types (sets with non-trivial identity). For full HoTT with univalence and higher inductive types, one needs $\infty$-groupoid-valued presheaves (simplicial sets or similar). This is the subject of RB-HoTT (future work).
\end{remark}

\section{Case Studies}
\label{sec:case-studies}

\subsection{Vector Sum: $O(n)$}

\begin{align*}
&\mathsf{sum} : \Pitype_{n:\Nat}^{\bot} \Vectt(\Nat, n) \to^{3n + 2} \Nat \\
&\mathsf{sum} = \lambda n.\, \lambda v.\, \vecrec(v, \zmark, \lambda m.\lambda a.\lambda w.\lambda ih.\, a + ih)
\end{align*}

\paragraph{Bound derivation.}
\begin{itemize}
  \item Base: returning $\zmark$ costs $\delta_\zmark = 1$
  \item Step: addition costs $\delta_+ = 2$, cons traversal costs $1$
  \item Total: $1 + n \cdot 3 = 3n + 1$, plus overhead $\delta_{\vecrec} = 1$
  \item Final bound: $3n + 2$
\end{itemize}

By Theorem~\ref{thm:cost-soundness}, $\mathsf{sum}\,n\,v$ evaluates with cost $\leq 3n + 2$.

\subsection{Vector Reverse: $O(n)$}

\begin{align*}
&\mathsf{reverse} : \Pitype_{n:\Nat}^{\bot} \Pitype_{A:\U}^{\bot} \Vectt(A, n) \to^{4n + 2} \Vectt(A, n) \\
&\mathsf{reverse} = \lambda n.\lambda A.\lambda v.\, \mathsf{rev\text{-}acc}(v, \mathsf{nil})
\end{align*}

where $\mathsf{rev\text{-}acc}$ is the tail-recursive helper with accumulator.

\subsection{Binary Search: $O(\log n)$}

For a sorted vector, binary search
\paragraph{Data-structure assumption.}
This bound assumes $O(1)$ random access (direct indexing) as in \S\ref{sec:safe-index}. If vectors are represented purely inductively and lookup traverses from the head, then each probe costs $O(n)$ and the overall bound becomes $O(n\log n)$.
 has logarithmic cost:

\begin{align*}
&\mathsf{bsearch} : \Pitype_{n:\Nat}^{\bot} \Vectt(\Nat, n) \to \Nat \to^{5 \cdot \lceil\log_2(n+1)\rceil + 3} \mathsf{Option}(\Fin(n))
\end{align*}

\paragraph{Bound derivation.}
\begin{itemize}
  \item Each iteration: comparison ($2$), index computation ($2$), slice ($1$) = $5$ total
  \item Iterations: $\lceil \log_2(n+1) \rceil$
  \item Base case and wrapping: $3$
  \item Total: $5 \cdot \lceil\log_2(n+1)\rceil + 3$
\end{itemize}

\subsection{Merge Sort: $O(n \log n)$}

\begin{align*}
&\mathsf{mergesort} : \Pitype_{n:\Nat}^{\bot} \Pitype_{A:\U}^{\bot} (A \to A \to^{c_{\mathsf{cmp}}} \mathsf{Bool}) \to \Vectt(A, n) \to^{c \cdot n \cdot \lceil\log_2(n+1)\rceil + d} \Vectt(A, n)
\end{align*}

\paragraph{Bound derivation.}
\begin{itemize}
  \item Divide: split vector in half, cost $O(n)$
  \item Conquer: two recursive calls on size $n/2$
  \item Combine: merge two sorted halves, cost $O(n)$
  \item Recurrence: $T(n) = 2T(n/2) + O(n)$
  \item Solution: $T(n) = O(n \log n)$
\end{itemize}

\subsection{Matrix Multiplication: $O(n^3)$}

For $n \times n$ matrices represented as $\Vectt(\Vectt(R, n), n)$:

\begin{align*}
&\mathsf{matmul} : \Pitype_{n:\Nat}^{\bot} \mathsf{Mat}(R, n) \to \mathsf{Mat}(R, n) \to^{c \cdot n^3 + d} \mathsf{Mat}(R, n)
\end{align*}

where $\mathsf{Mat}(R, n) = \Vectt(\Vectt(R, n), n)$.

\section{Implementation Notes}
\label{sec:implementation}

\subsection{Lean 4 Integration}

The framework can be implemented in Lean 4 using:

\begin{lstlisting}[language=Haskell]
-- Size-indexed vectors
inductive Vec (A : Type) : Nat -> Type where
  | nil  : Vec A 0
  | cons : A -> Vec A n -> Vec A (n + 1)

-- Bound annotation (custom attribute)
@[bound (fun n => 3 * n + 2)]
def sum : (n : Nat) -> Vec Nat n -> Nat
  | 0,     .nil        => 0
  | n + 1, .cons x xs  => x + sum n xs

-- Bound theorem
theorem sum_bound (n : Nat) (v : Vec Nat n) :
    cost (sum n v) <= 3 * n + 2 := by
  induction n with
  | zero => simp [sum, cost]
  | succ n ih => simp [sum, cost]; omega
\end{lstlisting}

\subsection{Agda Integration}

In Agda with sized types:

\begin{lstlisting}
data Vec (A : Set) : Nat -> Set where
  []  : Vec A zero
  _::_ : forall {n} -> A -> Vec A n -> Vec A (suc n)

sum : forall {n} -> Vec Nat n -> Nat
sum []       = 0
sum (x :: xs) = x + sum xs

-- Bound expressed in the type
sum-bounded : forall n -> (v : Vec Nat n) -> Cost (sum v) <= 3 * n + 2
\end{lstlisting}

A Lean 4 mechanization of the underlying (non resource-bounded) extrinsic MLTT
syntax and typing infrastructure used as the implementation substrate for this paper
is available in Corey Thuro's \texttt{RB-TT} repository. Concretely, the file\texttt{src/RBTT/Core/ExtrinsicMLTT.lean} provides the extrinsic judgemental
presentation (syntax, contexts, substitution, and typing predicates) that we
intend to extend with resource annotations and bound synthesis as described here:

\url{https://github.com/CoreyThuro/RB-TT/blob/main/src/RBTT/Core/ExtrinsicMLTT.lean}.

\section{Related Work}
\label{sec:related}

\paragraph{Dependent types and costs.} Danielsson~\cite{danielsson2008lightweight} develops lightweight semiformal complexity analysis using dependent types in Agda. Our work makes bounds first-class in the type system with formal soundness guarantees.

\paragraph{Sized types.} Abel~\cite{abel2010miniagda} and Hughes et al.~\cite{hughes1996sized} use size annotations for termination. We extend this idea to resource bounds, not just termination.

\paragraph{Type theory foundations.} Martin-L\"{o}f~\cite{martinlof1984} establishes the foundations; Nordstr\"{o}m et al.~\cite{nordstrom1990} provide an introduction. We extend MLTT with resource tracking.

\paragraph{Categorical semantics.} Hofmann~\cite{hofmann1997syntax} develops syntax and semantics of dependent types using CwFs. Jacobs~\cite{jacobs1999} provides comprehensive coverage. Our RB-CwF extends this with cost structure.

\paragraph{Coeffects and graded types.} Petricek et al.~\cite{petricek2014coeffects}, Brunel et al.~\cite{brunel2014core}, and Orchard et al.~\cite{orchard2019granule} develop graded type systems. We extend grading to dependent types with size-indexed bounds.

\paragraph{Quantitative type theory.} Atkey~\cite{atkey2018syntax} and McBride~\cite{mcbride2016got} track usage quantities. We track computational cost rather than usage.

\paragraph{RAML and amortized analysis.} Hoffmann et al.~\cite{hoffmann2017raml} infer polynomial bounds automatically. Our approach certifies bounds via typing rather than inference.

\paragraph{Prior work.} This paper extends \cite{mannucci2025rbtt} from simple types to dependent types.

\section{Limitations and Future Work}
\label{sec:future}

\paragraph{Current limitations.}
\begin{itemize}
  \item Bounds must be expressible in our bound language (polynomial, logarithmic)
  \item No automatic bound inference; bounds are checked, not synthesized
  \item Intensional identity types only; no univalence
  \item No higher inductive types
\end{itemize}

\paragraph{Future directions.}
\begin{itemize}
  \item \textbf{RB-HoTT:} Extend to Homotopy Type Theory with resource-bounded paths
  \item \textbf{Automatic inference:} Integrate with RAML-style inference
  \item \textbf{Cubical type theory:} Resource-bounded cubical structure
  \item \textbf{Probabilistic bounds:} Expected cost analysis
  \item \textbf{Mechanization:} Full formalization in Agda or Lean
  \item \textbf{Quotient types:} Resource-bounded quotients
\end{itemize}

\section{Conclusion}

We have extended resource-bounded type theory to Martin-L\"{o}f Type Theory, enabling size-indexed cost bounds for dependent types. The key innovations are:

\begin{enumerate}
  \item \textbf{Dependent bounds:} $\Pitype_{x:A}^{b(x)} B$ where $b(x)$ depends on the argument
  \item \textbf{Inductive families:} $\Vectt(A, n)$ and $\Fin(n)$ with size-dependent eliminator bounds
  \item \textbf{Resource universes:} $\U_r$ tracking type formation cost
  \item \textbf{Cost soundness:} Theorem~\ref{thm:cost-soundness} establishing bounds over-approximate costs
  \item \textbf{Semantic model:} Presheaves with dependent structure and initiality
\end{enumerate}

This bridges the gap between dependent type theory and quantitative resource analysis, enabling certified cost bounds for size-dependent algorithms expressed directly in types. The framework handles common complexity classes ($O(1)$, $O(n)$, $O(\log n)$, $O(n \log n)$, $O(n^2)$, $O(n^3)$) and validates bounds for fundamental algorithms including vector operations, binary search, and sorting.

Future work will extend the framework to Homotopy Type Theory (RB-HoTT), exploring resource-bounded paths and higher structure.

\end{document}